\newtheorem{thm}{Theorem}[section]
\newtheorem{proposition}[thm]{Proposition}
\newtheorem{corollary}[thm]{Corollary}
\newtheorem{theorem}[thm]{Theorem}
\theoremstyle{definition}
\newtheorem{definition}[thm]{Definition}
\newtheorem{example}[thm]{Example}
\newtheorem{remark}[thm]{Remark}
\newcommand{\cprime}{\/{\mathsurround=0pt$'$}}
\newcommand{\md}{W}
\newcommand{\Mfd}{\mathbf{M}}
\newcommand{\cl}{\colon}
\newcommand{\dft}{\mathrm{d}}
\newcommand{\qqquad}{\qquad\quad}
\newcommand{\ym}{\mathbb{Y}}
\newcommand{\tm}{\mathbb{T}}
\newcommand{\zsp}{\mathbb{Z}_{>0}}
\DeclareMathOperator{\mat}{Mat}
\DeclareMathOperator{\id}{Id}
\newcommand{\xc}{\mathbf{x}}
\newcommand{\ct}{c}
\newcommand{\qt}{q}
\newcommand{\bm}{m}
\newcommand{\dm}{N}
\newcommand{\dmp}{k}
\newcommand{\gmd}{\mathbf{G}}
\newcommand{\mA}{\mathbf{A}}
\newcommand{\mB}{\mathbf{B}}
\newcommand{\mC}{\mathbf{C}}
\newcommand{\mD}{\mathbf{D}}
\newcommand{\mL}{\mathbf{L}}
\newcommand{\bC}{\mathbb{C}}
\newcommand{\bR}{\mathbb{R}}
\newcommand{\bK}{\mathbb{K}}
\newcommand{\ah}{\mathbf{a}}
\newcommand{\ddh}{\mathbf{d}}
\newcommand{\bh}{\mathbf{b}}
\newcommand{\ch}{\mathbf{c}}
\newcommand{\fp}{\mathrm{p}}
\newcommand{\sz}{n}
\title[Set-theoretical solutions to the Zamolodchikov tetrahedron equation]%
{Algebraic and differential-geometric constructions of set-theoretical solutions to the Zamolodchikov tetrahedron equation}
\date{}
\author{Sergei Igonin\qquad\qquad Sotiris Konstantinou-Rizos} 
\address{Centre of Integrable Systems, P.G. Demidov Yaroslavl State University, Yaroslavl, Russia}
\author{}
\email{s-igonin@yandex.ru, skonstantin84@gmail.com}
\begin{document}

\maketitle

\begin{abstract}
We present several algebraic and differential-geometric constructions of 
tetrahedron maps, which are set-theoretical solutions to the Zamolodchikov tetrahedron equation.
In particular, we obtain a family of new (nonlinear) polynomial tetrahedron maps 
on the space of square matrices of arbitrary size, using a matrix refactorisation equation, 
which does not coincide with the standard local Yang--Baxter equation.
Liouville integrability is established for some of these maps.

Also, we show how to derive linear tetrahedron maps 
as linear approximations of nonlinear ones, using Lax representations and the differentials 
of nonlinear tetrahedron maps on manifolds.
We apply this construction to two nonlinear maps: 
a tetrahedron map obtained in~\cite{Dimakis}
in a study of soliton solutions of vector KP equations and 
a tetrahedron map obtained in~\cite{KR}
in a study of a matrix trifactorisation problem 
related to a Darboux matrix associated with a Lax operator for the NLS equation.
We derive parametric families of new linear tetrahedron maps 
(with nonlinear dependence on parameters), 
which are linear approximations for these nonlinear ones.

Furthermore, we present (nonlinear) matrix generalisations
of a tetrahedron map from Sergeev's classification~\cite{Sergeev}. 
These matrix generalisations can be regarded as tetrahedron maps in noncommutative variables.

Besides, several tetrahedron maps on arbitrary groups are constructed.
\end{abstract}

\bigskip

\hspace{.2cm} \textbf{PACS numbers:} 02.30.Ik, 02.90.+p, 03.65.Fd.

\hspace{.2cm} \textbf{Mathematics Subject Classification 2020:} 16T25, 81R12.


\hspace{.2cm} \textbf{Keywords:} 
Zamolodchikov tetrahedron equation, quantum Yang--Baxter equation, 

\hspace{2.6cm} tetrahedron maps, Yang--Baxter maps, Liouville integrability, Lax representations, 

\hspace{2.6cm} differentials of maps.

\section{Introduction}

\label{sintr}

The Zamolodchikov tetrahedron equation~\cite{Zamolodchikov,Zamolodchikov-2}
is a higher-dimensional analogue of the well-celebrated quantum Yang--Baxter equation.
They belong to the most fundamental equations in mathematical physics 
and have applications in many diverse branches of physics and mathematics, 
including statistical mechanics, quantum field theories, algebraic topology, 
and the theory of integrable systems. 
The Yang--Baxter and tetrahedron equations are members of the family 
of $n$-simplex equations~\cite{bazh82,DMH15,Hietarinta97,Sharygin-Talalaev,Nijhoff-2,Nijhoff},
where they correspond to the cases of $2$-simplex and $3$-simplex, respectively.
Some applications of the tetrahedron equation can be found 
in~\cite{Bazhanov-Sergeev,Bazhanov-Sergeev-2,DMH15,Doliwa-Kashaev,Gorbounov-Talalaev,Kapranov,Kashaev-Sergeev,Kassotakis-Tetrahedron,Nijhoff-2,Nijhoff,TalalUMN21,Yoneyama21} and references therein.

This paper is devoted to tetrahedron maps and their relations with Yang--Baxter maps, 
which are set-theoretical solutions to the tetrahedron equation 
and the Yang--Baxter equation, respectively.
Set-theoretical solutions of the Yang--Baxter equation have been intensively studied
by many authors after the work of Drinfeld~\cite{Drin92}.
Even before that, examples of such solutions were constructed by Sklyanin~\cite{skl88}.
A quite general construction for tetrahedron maps first appeared 
in works of Korepanov (see~\cite{Korep94,Korepanov-DS} and references therein) 
in connection with integrable dynamical systems in discrete time.
Presently, the relations of tetrahedron maps and Yang--Baxter maps
with integrable systems (including PDEs and lattice equations) 
and with algebraic structures (including groups and rings) are very active areas of research 
(see, e.g.,~\cite{abs2003,atk14,Bard_talk_10-2021,Vincent,carter2006,Dimakis,Doliwa-Kashaev,Ito2013,Kashaev-Sergeev,Kassotakis-Tetrahedron,KR,Sokor-Sasha,Kouloukas2,pap-Tongas,TalalUMN21,Yoneyama21}).
In this paper we present several algebraic and differential-geometric constructions of 
tetrahedron maps. 
Some of our constructions of tetrahedron maps use Yang--Baxter maps as an auxiliary tool.
The paper is organised as follows.

Section~\ref{preliminaries} contains the definitions  
of Yang--Baxter and tetrahedron maps and recalls some basic properties of them.
Section~\ref{stmlyb} recalls the well-known method
to construct tetrahedron maps by means of the matrix local Yang--Baxter equation 
written as a matrix refactorisation~\cite{Kashaev-Sergeev,Sergeev,Doliwa-Kashaev,KR}.
In Section~\ref{stmomf} we present a modification of this method.
Namely, in Section~\ref{stmomf} we find new tetrahedron maps by means of 
other matrix refactorisations, which 
are similar to the local Yang--Baxter equation, but do not coincide with it.
The following notation is used.
\begin{itemize}
	\item $\bK$ is a field. 
	Usually, $\bK$ is either $\bR$ or $\bC$, but one can consider also arbitrary fields.
	\item $\zsp$ is the set of positive integers.
	\item $\mathbf{1}_n$ is the $n\times n$ identity matrix for $n\in\zsp$.
	\item $\mat_n(\bK)$ is the set of $n\times n$ matrices with entries from~$\bK$.
\end{itemize}
For any $\sz,\bm\in\zsp$ we consider the matrix refactorisation equation~\eqref{mrfe} with~\eqref{L12bm},
where $\mA_{\sz}$, $\mB_{\sz}$, $\mC_{\sz}$, $\mD_{\sz}$ are $\sz\times\sz$ matrix-functions on a set~$\md$.
(That is, $\mA_{\sz}$, $\mB_{\sz}$, $\mC_{\sz}$, $\mD_{\sz}$ are maps from $\md$ to $\mat_{\sz}(\bK)$.)
In the case $\bm\neq\sz$ the matrices~\eqref{L12bm} are different from~\eqref{L1213}, 
and, therefore, equation~\eqref{mrfe} does not coincide 
with the standard matrix local Yang--Baxter equation~\eqref{lybef}.

Using equation~\eqref{mrfe} with~\eqref{L12bm} in the case when $\md=\mat_{\sz}(\bK)$ and
\begin{gather}
\label{iabcd}
\begin{gathered}
\mA_{\sz},\mB_{\sz},\mC_{\sz},\mD_{\sz}\cl\mat_{\sz}(\bK)\to\mat_{\sz}(\bK),\\
\mA_{\sz}(X)=\mathbf{1}_{\sz},\qquad\mB_{\sz}(X)=X,\qquad\mC_{\sz}(X)=0,\qquad\mD_{\sz}(X)=\mathbf{1}_{\sz},\qquad
X\in\mat_{\sz}(\bK),
\end{gathered}
\end{gather}
we obtain the following new polynomial tetrahedron maps:
\begin{itemize}
	\item \eqref{mc1} for $\bm\in\{1,\dots,\sz\}$,
	\item \eqref{mc2} for $\bm\in\{\sz+1,\dots,2\sz-1\}$.
\end{itemize}
In our opinion, it is interesting to see that such a nonstandard matrix refactorisation equation
gives tetrahedron maps.
The recent paper~\cite{igon322} (which was written later than the first version of this paper) 
presents an algebraic explanation of the fact that the maps~\eqref{mc1},~\eqref{mc2}
satisfy the tetrahedron equation.

In Section~\ref{slint} we prove Liouville integrability 
\begin{itemize}
	\item for the map~\eqref{mc1} in the case $1\le\bm\le\sz/2$ in Theorem~\ref{emnmc1},
	\item for the map~\eqref{mc2} in the case $3\sz/2\le\bm\le 2\sz-1$ in Theorem~\ref{emnmc2}.
\end{itemize}
Examples~\ref{en2m1}, \ref{en5m2}, \ref{en2m3}, \ref{en5m8} 
clarify these results for small values of $\sz$, $\bm$.

We use the standard notion of Liouville integrability for maps on manifolds
(see, e.g.,~\cite{fordy14,Sokor-Sasha,vesel1991} and references therein).
This notion is presented in Definition~\ref{dli}.
Recall that Liouville integrability plays essential roles 
in classical mechanics and in the theory of integrable maps
(see, e.g.,~\cite{fordy14,hjn-book,Pavlos2019,Sokor-Sasha,vesel1991}).
We think that new examples of Liouville integrable maps satisfying the tetrahedron equation
make valuable contribution in clarifying the interplay between the classical Liouville integrability 
and the $3$-dimensional discrete integrability encoded in the tetrahedron equation.

In Section~\ref{stmlyb} we construct also new tetrahedron 
maps~\eqref{mnK},~\eqref{tmnK}, 
which are matrix generalisations of the map~\eqref{msk} 
from Sergeev's classification~\cite{Sergeev}.
As shown in Proposition~\ref{rnc},
in~\eqref{mnK},~\eqref{tmnK} one can 
replace~$\mat_{\sz}(\bK)$ by an arbitrary associative ring~$\mathcal{A}$ with a unit.
As a result, we get the maps~\eqref{amnK},~\eqref{atmnK}, 
which can be regarded as tetrahedron maps in noncommutative variables.

It is known that the local Yang--Baxter equation~\cite{Nijhoff-2} can be viewed  
as a ``Lax equation'' or ``Lax system'' for the tetrahedron equation 
(see, e.g.,~\cite{DMH15} and references therein).
This allows one to introduce the notion of Lax representations 
for tetrahedron maps, see Remark~\ref{rlaxr} for details.

In Section~\ref{sdtm} we show how to derive linear tetrahedron maps 
as linear approximations of nonlinear ones, using Lax representations and 
results of~\cite{ikkp21} on the differentials of nonlinear tetrahedron maps on manifolds.
In our opinion, this connection between Lax representations 
and the construction of linear tetrahedron maps 
as linear approximations of nonlinear ones is interesting and deserves attention.
(As discussed in Remark~\ref{rresul}, 
Lax representations were not considered in~\cite{ikkp21}.)
In Examples~\ref{edmh},~\ref{enls} we apply this construction to two nonlinear maps: 
\begin{itemize}
	\item a tetrahedron map obtained in~\cite{Dimakis}
in a study of soliton solutions of vector Kadomtsev--Petviashvili (KP) equations;
\item a tetrahedron map obtained in~\cite{KR} in a study of a matrix trifactorisation problem 
related to a Darboux matrix associated with a Lax operator for the nonlinear Schr\"odinger (NLS) equation.
\end{itemize}
As a result, 
one derives parametric families of new linear tetrahedron maps~\eqref{lmqt},~\eqref{lmqt1},~\eqref{lmqt2}, 
which are linear approximations for these nonlinear ones.
The obtained parametric families of linear tetrahedron maps~\eqref{lmqt},~\eqref{lmqt1},~\eqref{lmqt2} 
depend nonlinearly on the parameters $\qt$, $\qt_1$, $\qt_2$.

Relations with results of~\cite{ikkp21} are discussed in Remark~\ref{rresul}.
Note that linear approximations of nonlinear Yang--Baxter maps
were considered in~\cite{Sokor-Sasha,BIKRP}.

In Section~\ref{stmgr} we construct new tetrahedron maps~\eqref{tmgr} 
on an arbitrary group~$\gmd$. 
Section~\ref{sconc} concludes the paper 
with suggestions on how the results of this paper can be extended.

\section{Preliminaries}
\label{preliminaries}

For any set $S$ and $n\in\zsp$, we use the notation
$S^n=\underbrace{S\times S\times\dots\times S}_{n}$.
Also, we denote by $\id_S\cl S\to S$ the identity map.

Let $\md$ be a set. A \emph{Yang--Baxter map} is a map 
$$
Y\colon \md\times \md\to \md\times \md,\qquad Y(x,y)=\big(u(x,y),v(x,y)\big),\qquad x,y\in \md,
$$
satisfying the Yang--Baxter equation
\begin{equation}\label{eq_YB}
Y^{12}\circ Y^{13}\circ Y^{23}=Y^{23}\circ Y^{13}\circ Y^{12}.
\end{equation}
The terms $Y^{12}$, $Y^{13}$, $Y^{23}$ in~\eqref{eq_YB} are maps $\md^3\to \md^3$ defined as follows 
\begin{gather*}
Y^{12}(x,y,z)=\big(u(x,y),v(x,y),z\big),\qquad\quad
Y^{23}(x,y,z)=\big(x,u(y,z),v(y,z)\big),\\
Y^{13}(x,y,z)=\big(u(x,z),y,v(x,z)\big),\qquad\quad x,y,z\in \md.
\end{gather*}

A \emph{tetrahedron map} is a map 
\begin{equation}
\notag
T\cl \md^3\rightarrow \md^3,\qquad
T(x,y,z)=\big(f(x,y,z),g(x,y,z),h(x,y,z)\big),\qquad x,y,z\in \md,
\end{equation}
satisfying the (Zamolodchikov) \emph{tetrahedron equation}
\begin{equation}\label{tetr-eq}
    T^{123}\circ T^{145} \circ T^{246}\circ T^{356}=T^{356}\circ T^{246}\circ T^{145}\circ T^{123}.
\end{equation}
Here $T^{ijk}\cl \md^6\rightarrow \md^6$ for $i,j,k=1,\ldots,6$, $i<j<k$, is the map 
acting as $T$ on the $i$th, $j$th, $k$th factors 
of the Cartesian product $\md^6$ and acting as identity on the remaining factors.
For instance,
$$
T^{246}(x,y,z,r,s,t)=\big(x,f(y,r,t),z,g(y,r,t),s,h(y,r,t)\big),\qqquad x,y,z,r,s,t\in \md.
$$

Sometimes we write $T\big((x,y,z)\big)$ instead of $T(x,y,z)$, 
in order to emphasise that we apply $T$ to a given point $(x,y,z)\in\md^3$.

\begin{proposition}[\cite{Kassotakis-Tetrahedron}]
\label{pp13}
Consider the permutation map
\begin{gather*}
P^{13}\cl \md^3\to \md^3,\qqquad
P^{13}(a_1,a_2,a_3)=(a_3,a_2,a_1),\qquad a_i\in \md.
\end{gather*}
If a map $T\cl \md^3\to \md^3$ satisfies the tetrahedron equation~\eqref{tetr-eq}
then $\tilde{T}=P^{13}\circ T\circ P^{13}$ obeys this equation as well.
\end{proposition}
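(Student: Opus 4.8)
The plan is to show that conjugating a tetrahedron map by the permutation $P^{13}$ in each factor produces another solution, and the key observation is that $P^{13}$ is an involution that intertwines the indexing of the tensor factors in a controlled way. Concretely, I would first record that $P^{13}\circ P^{13}=\id_{\md^3}$, so that $\tilde T=P^{13}\circ T\circ P^{13}$ and the analogous conjugates of the $T^{ijk}$ are well-defined and that applying $P^{13}$ twice anywhere in a composition is harmless. The heart of the argument is to understand, for each triple $ijk$ appearing in~\eqref{tetr-eq}, how the operator $\tilde T^{ijk}$ on $\md^6$ relates to the original operators $T^{i'j'k'}$. Since $P^{13}$ reverses the order of the three arguments, I expect conjugation by the global reversal on $\md^6$ (the map sending the six factors $(1,2,3,4,5,6)$ to $(6,5,4,3,2,1)$) to turn each $T^{ijk}$ into $\tilde T$ acting on the reversed positions, with the order of the three indices reversed as well.

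First I would make precise the bookkeeping: let $R\cl\md^6\to\md^6$ be the reversal $R(a_1,\dots,a_6)=(a_6,a_5,a_4,a_3,a_2,a_1)$, which is again an involution built from $P^{13}$-type swaps. The main computational step is to verify the conjugation identity
\begin{equation}
\notag
R\circ T^{ijk}\circ R=\tilde T^{\,(7-k)(7-j)(7-i)},
\end{equation}
that is, conjugating $T^{ijk}$ by the reversal $R$ yields $\tilde T$ acting on the complementary, order-reversed triple $(7-k,7-j,7-i)$. This is the place where one must be careful, since $P^{13}$ reverses the \emph{internal} order of a triple while $R$ reverses the \emph{global} order of the six factors, and one has to check that these two reversals combine correctly; this is the step I expect to be the main obstacle, as it is easy to make an off-by-index error in matching $T^{ijk}$ to the right conjugate on the correct factors.

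Once this identity is established, the rest is formal. Applying $R\circ(-)\circ R$ to the left-hand side of the tetrahedron equation~\eqref{tetr-eq} and inserting cancelling pairs $R\circ R=\id$ between the four factors converts the composition $T^{123}\circ T^{145}\circ T^{246}\circ T^{356}$ into a composition of the four maps $\tilde T^{(7-3)(7-2)(7-1)}=\tilde T^{456}$, and likewise $\tilde T^{236}$, $\tilde T^{134}$, $\tilde T^{125}$, in reversed order. I would then observe that the set of index-triples $\{123,145,246,356\}$ is carried by $ijk\mapsto(7-k)(7-j)(7-i)$ precisely onto the set $\{456,236,134,125\}$ appearing on the right-hand side of~\eqref{tetr-eq}, and that the reversal of the order of composition matches the right-hand ordering $T^{356}\circ T^{246}\circ T^{145}\circ T^{123}$. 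Hence conjugating the \emph{assumed} tetrahedron identity for $T$ by $R$ yields exactly the tetrahedron identity for $\tilde T$, and since $R$ is invertible this proves $\tilde T$ satisfies~\eqref{tetr-eq}. An alternative, perhaps cleaner, route avoids introducing $R$ altogether and instead checks the single relation $\tilde T^{ijk}=(P^{13})^{\otimes 3\text{-type}}$-conjugate directly on each factor by expanding both sides on a general point of $\md^6$; but the reversal-conjugation approach makes the symmetry transparent and localises the only real work into the one index-matching identity above.
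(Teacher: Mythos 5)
The paper does not actually prove Proposition~\ref{pp13} --- it is quoted from~\cite{Kassotakis-Tetrahedron} --- so I am judging your argument on its own. Your overall strategy (conjugate each $T^{ijk}$ by a fixed permutation of the six factors of $\md^6$ and match the result against the tetrahedron equation for $\tilde{T}$) is the right one, and your conjugation identity $R\circ T^{ijk}\circ R=\tilde T^{\,(7-k)(7-j)(7-i)}$ is correct as stated. The gap is in the index matching, exactly the step you flagged as delicate. Under $m\mapsto 7-m$ the four triples $123,145,246,356$ go to $456,\,236,\,135,\,124$ (note the arithmetic slips in your $134$ and $125$), and this is \emph{not} the set $\{123,145,246,356\}$: both sides of~\eqref{tetr-eq} are built from the \emph{same} four triples, merely composed in opposite order, so there is no side of the equation on which $\{456,236,135,124\}$ appears. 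Conjugating by the full reversal $R$ therefore produces a genuinely different relation for $\tilde T$ (a ``vertex-type'' identity on the triples $124,135,236,456$), not the tetrahedron equation~\eqref{tetr-eq}, and the proof does not close.

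The repair is to conjugate by the permutation $\sigma$ of $\{1,\dots,6\}$ with $1\leftrightarrow 6$, $2\leftrightarrow 5$, and $3,4$ fixed, rather than by the full reversal (which also swaps $3\leftrightarrow 4$). One checks that $\sigma$ reverses the internal order of each of the four triples ($123\mapsto 653$, $145\mapsto 642$, $246\mapsto 541$, $356\mapsto 321$) and permutes the set of triples by $123\leftrightarrow 356$, $145\leftrightarrow 246$. Your own computation of $P_\sigma\circ T^{ijk}\circ P_\sigma$ (with $P_\sigma$ the induced map on $\md^6$) then gives $P_\sigma T^{123}P_\sigma=\tilde T^{356}$, $P_\sigma T^{145}P_\sigma=\tilde T^{246}$, $P_\sigma T^{246}P_\sigma=\tilde T^{145}$, $P_\sigma T^{356}P_\sigma=\tilde T^{123}$, so inserting $P_\sigma P_\sigma=\id$ between factors sends the left-hand side of~\eqref{tetr-eq} for $T$ to the right-hand side of~\eqref{tetr-eq} for $\tilde T$ and vice versa, which completes the proof. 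Everything else in your write-up (the involutivity bookkeeping, the formal cancellation argument) is fine once the correct permutation is in place.
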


\begin{proposition}[\cite{Kassotakis-Tetrahedron}]
\label{pist}
Let $T\cl \md^3\to \md^3$ be a tetrahedron map.
Suppose that a map $\sigma\cl \md\to \md$ satisfies 
\begin{gather}
\label{stss}
(\sigma\times\sigma\times\sigma)\circ T\circ(\sigma\times\sigma\times\sigma)=T,
\qqquad \sigma\circ\sigma=\id_\md.
\end{gather}
Note that, since $\sigma\circ\sigma=\id_\md$, 
the relation $(\sigma\times\sigma\times\sigma)\circ T\circ(\sigma\times\sigma\times\sigma)=T$
is equivalent to $(\sigma\times\sigma\times\sigma)\circ T=T\circ (\sigma\times\sigma\times\sigma)$, 
hence~\eqref{stss} means that $\sigma$ is an involutive symmetry of $T$.

Then
\begin{gather*}
\tilde{T}=(\sigma\times\id_\md\times\sigma)\circ T\circ(\id_\md\times\sigma\times\id_\md),\qqquad
\hat{T}=(\id_\md\times\sigma\times\id_\md)\circ T\circ(\sigma\times\id_\md\times\sigma)
\end{gather*}
are tetrahedron maps.
\end{proposition}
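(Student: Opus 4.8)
The plan is to work at the level of the ``lifted'' maps on $\md^6$ and to show that, after substituting the definition of $\tilde T$ into the tetrahedron equation~\eqref{tetr-eq}, all the factors of $\sigma$ can be transported to the two ends, leaving the tetrahedron equation for $T$ in the middle. First I would fix notation: for $l\in\{1,\dots,6\}$ let $\sigma_l\cl\md^6\to\md^6$ apply $\sigma$ to the $l$th factor and $\id_\md$ to the others. These maps pairwise commute, satisfy $\sigma_l\circ\sigma_l=\id_{\md^6}$, and commute with $T^{ijk}$ whenever $l\notin\{i,j,k\}$. The one nontrivial relation comes from~\eqref{stss}: since $\sigma\times\sigma\times\sigma$ commutes with $T$, the triple product $\sigma_i\circ\sigma_j\circ\sigma_k$ commutes with $T^{ijk}$ for each admissible triple. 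Finally, straight from the definition, $\tilde T^{ijk}=\sigma_i\circ\sigma_k\circ T^{ijk}\circ\sigma_j$, the outer factors $i,k$ receiving $\sigma$ on the left and the middle factor $j$ on the right.

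Before the main computation I would record a simplification. Writing $A=\sigma\times\id_\md\times\sigma$, $B=\id_\md\times\sigma\times\id_\md$ and $S=\sigma\times\sigma\times\sigma$, one checks $A\circ A=B\circ B=\id_{\md^3}$, $A\circ B=B\circ A=S$ and $A\circ S=S\circ A=B$. Because $B=S\circ A$ and $S\circ T=T\circ S$ by~\eqref{stss}, we get $\tilde T=A\circ T\circ B=A\circ T\circ S\circ A=A\circ S\circ T\circ A=B\circ T\circ A=\hat T$. Thus $\tilde T$ and $\hat T$ coincide under the hypothesis, and it suffices to prove that $\tilde T$ satisfies~\eqref{tetr-eq}.

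Now I would substitute $\tilde T^{ijk}=\sigma_i\circ\sigma_k\circ T^{ijk}\circ\sigma_j$ into both sides of~\eqref{tetr-eq} and push the resulting $\sigma_l$'s past the maps $T^{ijk}$. A factor $\sigma_l$ with $l\notin\{i,j,k\}$ slides freely through $T^{ijk}$; the only way to move $\sigma$'s past $T^{ijk}$ when the index lies in $\{i,j,k\}$ is to assemble a complete triple $\sigma_i\circ\sigma_j\circ\sigma_k$ and invoke the symmetry relation above. The key manoeuvre is therefore to \emph{complete} the partial triples: a missing generator $\sigma_l$ needed next to some $T^{ijk}$ can be imported from a distant block, because its index $l$ lies outside every triple it must cross on the way, so it travels freely; once the triple is complete it passes through $T^{ijk}$ untouched. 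Carrying this out, I expect the left- and right-hand sides of~\eqref{tetr-eq} for $\tilde T$ to collapse to $(\sigma_2\circ\sigma_5)\circ\big(T^{123}\circ T^{145}\circ T^{246}\circ T^{356}\big)\circ(\sigma_2\circ\sigma_5)$ and $(\sigma_2\circ\sigma_5)\circ\big(T^{356}\circ T^{246}\circ T^{145}\circ T^{123}\big)\circ(\sigma_2\circ\sigma_5)$ respectively, carrying the \emph{same} outer decoration $\sigma_2\circ\sigma_5$. The tetrahedron equation for $T$ then identifies the two middle products, which finishes the argument.

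The main obstacle is precisely the bookkeeping in this last step: one must track how each $\sigma_l$ commutes with the four maps $T^{123},T^{145},T^{246},T^{356}$ and, crucially, organise the stray factors into full triples in the right order at the right places. This exploits the incidence pattern of the tetrahedron equation—each of the six factors is shared by exactly two of the four maps—so that, after the partial triples are completed and absorbed via the symmetry, only the symmetric decoration $\sigma_2\circ\sigma_5$ survives at the two ends, identically on both sides.
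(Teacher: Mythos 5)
The paper itself gives no proof of Proposition~\ref{pist}: it is quoted from~\cite{Kassotakis-Tetrahedron}, so there is nothing internal to compare your argument against. Judged on its own merits, your proof is correct and complete in outline, and the one step you only ``expect'' to work does in fact work. Your preliminary observation that $\tilde T=\hat T$ under hypothesis~\eqref{stss} (via $B=S\circ A$ and $S\circ T=T\circ S$) is a genuine simplification. For the main computation, the two forms $\tilde T^{ijk}=\sigma_i\sigma_k T^{ijk}\sigma_j=\sigma_j T^{ijk}\sigma_i\sigma_k$ (the second obtained by inserting $(\sigma_i\sigma_j\sigma_k)^2=\id$ and using the symmetry) make the bookkeeping routine; for instance
\begin{gather*}
\tilde T^{123}\tilde T^{145}
=\sigma_2 T^{123}\sigma_1\sigma_3\cdot\sigma_1\sigma_5 T^{145}\sigma_4
=\sigma_2 T^{123}\sigma_3\sigma_5 T^{145}\sigma_4
=\sigma_2\sigma_5\,T^{123}T^{145}\,\sigma_3\sigma_4,
\end{gather*}
and continuing in this way one gets exactly your claimed collapse
$\tilde T^{123}\tilde T^{145}\tilde T^{246}\tilde T^{356}
=\sigma_2\sigma_5\big(T^{123}T^{145}T^{246}T^{356}\big)\sigma_2\sigma_5$
and, symmetrically,
$\tilde T^{356}\tilde T^{246}\tilde T^{145}\tilde T^{123}
=\sigma_2\sigma_5\big(T^{356}T^{246}T^{145}T^{123}\big)\sigma_2\sigma_5$,
after which the tetrahedron equation for $T$ finishes the proof. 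The only improvement I would suggest is to write out these four-step reductions explicitly rather than asserting the expected outcome, since the whole content of the proposition lives in that cancellation; but the relations you isolated (commutation of $\sigma_l$ with $T^{ijk}$ for $l\notin\{i,j,k\}$, commutation of the full triple $\sigma_i\sigma_j\sigma_k$ with $T^{ijk}$, and involutivity) are precisely sufficient, so there is no gap.
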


The statement of Proposition~\ref{prop_YB_tetr} is well known.
A proof can be found, e.g., in~\cite{ikkp21}.
\begin{proposition}\label{prop_YB_tetr}
Let $Y\colon \md^2\rightarrow \md^2$ be a Yang--Baxter map. 
Then the maps
$$
Y^{12}=Y\times\id_\md\colon \md^3\rightarrow \md^3,\qqquad
Y^{23}=\id_\md\times Y\colon \md^3\rightarrow \md^3
$$ 
are tetrahedron maps.
\end{proposition}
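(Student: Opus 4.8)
The plan is to reduce the tetrahedron equation~\eqref{tetr-eq} for each of the two maps to the Yang--Baxter equation~\eqref{eq_YB}, using the fact that a map of the form $Y\times\id_\md$ (respectively $\id_\md\times Y$) acts nontrivially on only two of the three tensor factors.

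First I would compute explicitly what the four maps $T^{123},T^{145},T^{246},T^{356}\cl\md^6\to\md^6$ in~\eqref{tetr-eq} become when $T=Y^{12}=Y\times\id_\md$. Since $T$ applies $Y$ to its first two arguments and fixes the third, the superscripted map $T^{ijk}$ applies $Y$ to the $i$th and $j$th factors of $\md^6$ and acts as the identity on all remaining factors (in particular on the $k$th). Writing $Y^{ab}$ for the map on $\md^6$ that applies $Y$ to factors $a,b$ and the identity elsewhere, one obtains $T^{123}=Y^{12}$, $T^{145}=Y^{14}$, $T^{246}=Y^{24}$, $T^{356}=Y^{35}$, so~\eqref{tetr-eq} reads
\[
Y^{12}\circ Y^{14}\circ Y^{24}\circ Y^{35}=Y^{35}\circ Y^{24}\circ Y^{14}\circ Y^{12}.
\]

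The key observation is that $Y^{35}$ acts on factors $\{3,5\}$, which are disjoint from the factors $\{1,2,4\}$ touched by the other three maps; hence $Y^{35}$ commutes with each of $Y^{12}$, $Y^{14}$, $Y^{24}$. Moving $Y^{35}$ to the leftmost position on the left-hand side (on the right-hand side it already stands leftmost) and cancelling the bijective factor $Y^{35}$ from both sides, the identity collapses to $Y^{12}\circ Y^{14}\circ Y^{24}=Y^{24}\circ Y^{14}\circ Y^{12}$. As factors $3,5,6$ are now untouched, this is precisely the Yang--Baxter equation~\eqref{eq_YB} read on factors $1,2,4$ (relabelling factor $4$ as the third slot), which holds by hypothesis. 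The case $T=Y^{23}=\id_\md\times Y$ is entirely analogous: now $T^{ijk}$ applies $Y$ to factors $j,k$, giving $T^{123}=Y^{23}$, $T^{145}=Y^{45}$, $T^{246}=Y^{46}$, $T^{356}=Y^{56}$; here it is $Y^{23}$ (support $\{2,3\}$) that is disjoint from the rest (support $\{4,5,6\}$), and after cancellation one is left with the Yang--Baxter equation on factors $4,5,6$.

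Since every step is bookkeeping with disjoint supports, there is no genuine analytic difficulty. The only delicate point---and the easiest place to slip---is the precise identification of the index sets: one must check that in each of the two chains exactly one of the four maps has support disjoint from the other three (and note that this is the \emph{last} factor for $Y^{12}$ but the \emph{first} factor for $Y^{23}$), and that the surviving three-map identity matches the Yang--Baxter relation~\eqref{eq_YB} with the correct ordering and relabelling of indices rather than some permuted or inverted variant. Keeping the tensor-slot labels consistent throughout is essentially the entire content of the argument.
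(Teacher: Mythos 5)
Your argument is correct and is the standard one; note that the paper itself does not prove Proposition~\ref{prop_YB_tetr} but defers to~\cite{ikkp21}, so there is no in-paper proof to compare against. Your index bookkeeping is right: for $T=Y\times\id_\md$ one gets $T^{123}=Y^{12}$, $T^{145}=Y^{14}$, $T^{246}=Y^{24}$, $T^{356}=Y^{35}$, and the support of $Y^{35}$ is disjoint from the supports of the other three, so the tetrahedron identity reduces to the Yang--Baxter identity on factors $1,2,4$; similarly for $\id_\md\times Y$ with $Y^{23}$ disjoint from $\{4,5,6\}$. One small caveat: you justify removing $Y^{35}$ by calling it a ``bijective factor,'' but the definition of a Yang--Baxter map in this paper does not assume bijectivity, so that cancellation step is not literally available. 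It is also unnecessary: since the implication you need runs from the Yang--Baxter equation to the tetrahedron equation, you should simply write both sides as $Y^{35}\circ(\cdot)$ using the commutation relations and then substitute the Yang--Baxter identity $Y^{12}\circ Y^{14}\circ Y^{24}=Y^{24}\circ Y^{14}\circ Y^{12}$ inside the composition, which yields the equality of the two sides without cancelling anything. With that one-line rephrasing the proof is complete.
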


The result of Proposition~\ref{prinvt} is also well known,
but for completeness we present a proof for it.
\begin{proposition}
\label{prinvt}
Let $T\cl \md^3\to \md^3$ be an invertible map satisfying the
tetrahedron equation~\eqref{tetr-eq}.
Then the inverse map $T^{-1}\cl \md^3\to \md^3$ obeys this equation as well.
\end{proposition}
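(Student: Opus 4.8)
The plan is to take inverses of both sides of the tetrahedron equation~\eqref{tetr-eq} satisfied by $T$, exploiting two facts: that inverting a composition reverses the order of its factors, and that $(T^{ijk})^{-1}=(T^{-1})^{ijk}$ for each admissible triple. This is the direct tetrahedron analogue of the standard observation that the inverse of a Yang--Baxter map is again a Yang--Baxter map.

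First I would record the elementary observation that, since $T$ is invertible, each map $T^{ijk}\cl\md^6\to\md^6$ is invertible with $(T^{ijk})^{-1}=(T^{-1})^{ijk}$. Indeed, $T^{ijk}$ acts as $T$ on the $i$th, $j$th, $k$th factors of $\md^6$ and as the identity on the remaining three factors; hence its inverse acts as $T^{-1}$ on those same factors and as the identity elsewhere, which is exactly the definition of $(T^{-1})^{ijk}$.

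Next, applying the inverse to both sides of~\eqref{tetr-eq} and using $(\phi\circ\psi)^{-1}=\psi^{-1}\circ\phi^{-1}$ to reverse each of the four-fold compositions, I would obtain
$$
(T^{356})^{-1}\circ(T^{246})^{-1}\circ(T^{145})^{-1}\circ(T^{123})^{-1}
=(T^{123})^{-1}\circ(T^{145})^{-1}\circ(T^{246})^{-1}\circ(T^{356})^{-1}.
$$
Substituting $(T^{ijk})^{-1}=(T^{-1})^{ijk}$ and writing $S=T^{-1}$, this reads
$$
S^{356}\circ S^{246}\circ S^{145}\circ S^{123}
=S^{123}\circ S^{145}\circ S^{246}\circ S^{356},
$$
which is precisely the tetrahedron equation~\eqref{tetr-eq} for $S=T^{-1}$ with the two sides interchanged. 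Hence $T^{-1}$ obeys the tetrahedron equation.

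There is essentially no serious obstacle here: the content is purely formal manipulation of compositions. The only point requiring care is the bookkeeping in the identity $(T^{ijk})^{-1}=(T^{-1})^{ijk}$, which is immediate from the definition of $T^{ijk}$, and ensuring that the order reversal is applied consistently to all four factors on each side so that the superscript patterns $123,145,246,356$ end up matching those in~\eqref{tetr-eq} exactly.
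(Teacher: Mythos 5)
Your proposal is correct and follows essentially the same route as the paper's own proof: establish $(T^{ijk})^{-1}=(T^{-1})^{ijk}$, invert both sides of the tetrahedron equation, and use the reversal of composition order to recover the equation for $T^{-1}$. No gaps.
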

\begin{proof}
We set $\tilde{T}=T^{-1}$. It is easily seen that
\begin{gather}
\label{ttt}
\tilde{T}^{123}=(T^{123})^{-1},\qquad
\tilde{T}^{145}=(T^{145})^{-1},\qquad
\tilde{T}^{246}=(T^{246})^{-1},\qquad
\tilde{T}^{356}=(T^{356})^{-1}.
\end{gather}
Using~\eqref{ttt}, one obtains
\begin{gather}
\label{tt1}
(T^{123}\circ T^{145}\circ T^{246}\circ T^{356})^{-1}=
(T^{356})^{-1}(T^{246})^{-1}(T^{145})^{-1}(T^{123})^{-1}=
\tilde{T}^{356}\circ \tilde{T}^{246}\circ \tilde{T}^{145}\circ \tilde{T}^{123},\\
\label{tt2}
(T^{356}\circ T^{246}\circ T^{145}\circ T^{123})^{-1}=
(T^{123})^{-1}(T^{145})^{-1}(T^{246})^{-1}(T^{356})^{-1}=
\tilde{T}^{123}\circ\tilde{T}^{145}\circ\tilde{T}^{246}\circ\tilde{T}^{356}.
\end{gather}
From~\eqref{tetr-eq},~\eqref{tt1},~\eqref{tt2} we derive 
$\tilde{T}^{123}\circ\tilde{T}^{145}\circ\tilde{T}^{246}\circ\tilde{T}^{356}=
\tilde{T}^{356}\circ \tilde{T}^{246}\circ \tilde{T}^{145}\circ \tilde{T}^{123}$,
which means that the map $\tilde{T}=T^{-1}$ satisfies the tetrahedron equation.
\end{proof}

\section{Tetrahedron maps related to matrix refactorisations}
\label{stmrmr}

\subsection{Tetrahedron maps associated with the matrix local Yang--Baxter equation} 
\label{stmlyb}

In this subsection we construct new tetrahedron maps by means 
of the well-known method based on the matrix local Yang--Baxter equation 
written as a matrix refactorisation~\cite{Kashaev-Sergeev,Sergeev,Doliwa-Kashaev,KR}.
In Subsection~\ref{stmomf} we find new tetrahedron maps by means of 
other matrix refactorisations, which 
are similar to the local Yang--Baxter equation, but do not coincide with it.

Let $\bK$ be a field.
For any $n\in\zsp$ we denote by $\mat_n(\bK)$ 
the set of $n\times n$ matrices with entries from~$\bK$.
Let $\mathbf{1}_n$ be the $n\times n$ identity matrix.

Consider a set $\md$ and maps 
$\mA_{\sz},\mB_{\sz},\mC_{\sz},\mD_{\sz}\cl \md\to\mat_{\sz}(\bK)$ for some ${\sz}\in\zsp$.
That is, $\mA_{\sz},\mB_{\sz},\mC_{\sz},\mD_{\sz}$ are ${\sz}\times {\sz}$ matrix-functions on the set $\md$.
Then we have the corresponding $2{\sz}\times 2{\sz}$ matrix-function
$$
\mL=\begin{pmatrix}
     \mA_{\sz}& \mB_{\sz}  \\
     \mC_{\sz} & \mD_{\sz}
\end{pmatrix}\cl \md\to\mat_{2{\sz}}(\bK)
$$ 
and the $3{\sz}\times 3{\sz}$ matrix-functions 
$\mL^{12},\mL^{13},\mL^{23}\cl \md\to\mat_{3{\sz}}(\bK)$
\begin{gather}
\label{L1213}
\mL^{12}=\begin{pmatrix}
     \mA_{\sz}& \mB_{\sz} &  0 \\
     \mC_{\sz} & \mD_{\sz} & 0 \\
		0 & 0 & \mathbf{1}_{\sz}
\end{pmatrix},\qqquad
\mL^{13}=\begin{pmatrix}
     \mA_{\sz}& 0 &  \mB_{\sz} \\
     0 & \mathbf{1}_{\sz} & 0 \\
		\mC_{\sz} & 0 & \mD_{\sz}
\end{pmatrix},\qqquad
\mL^{23}=\begin{pmatrix}
    \mathbf{1}_{\sz} &0 &0\\ 
		0& \mA_{\sz}& \mB_{\sz}  \\
    0& \mC_{\sz} & \mD_{\sz} 
\end{pmatrix}.
\end{gather}
Consider the corresponding local Yang--Baxter equation
\begin{equation} 
\label{lybef}
\mL^{12} (x) \cdot\mL^{13} (y)\cdot\mL^{23}(z) = 
\mL^{23}(\hat{z})\cdot\mL^{13}(\hat{y})\cdot\mL^{12}(\hat{x}),\qqquad
x,y,z,\hat{x},\hat{y},\hat{z}\in \md.
\end{equation}

To our knowledge, in all available examples the following property is valid:

\noindent
\emph{Suppose that equation}~\eqref{lybef} \emph{determines a map}
\begin{gather}
\label{mflyb}
\tm\cl \md\times\md\times\md\to\md\times\md\times\md,\qqquad
\tm\big((x,y,z)\big)=(\hat{x},\hat{y},\hat{z}),
\end{gather}
\emph{in the sense that equation}~\eqref{lybef} \emph{is equivalent to the relation}
$(\hat{x},\hat{y},\hat{z})=\tm\big((x,y,z)\big)$.
\emph{Then the map~\eqref{mflyb} satisfies the tetrahedron equation}~\eqref{tetr-eq}.

One can expect that this property holds in general under some non-degeneracy conditions,
but we do not find a detailed general proof in the available literature.
When we get a map~\eqref{mflyb} arising from~\eqref{lybef}, 
we check separately that the map satisfies the tetrahedron equation.

\begin{example}
\label{esmk}
Let $\sz=1$ and $\md=\bK$. Following~\cite{Sergeev}, 
fix a constant $k\in\bK$ and consider the $2\times 2$ matrix-function
$$
\mL\cl \bK\to\mat_{2}(\bK),\qqquad \mL(x)=
\begin{pmatrix}
     1& x  \\
     0 & k
\end{pmatrix},\qqquad x\in\bK.
$$ 
As shown in~\cite{Sergeev}, 
the corresponding local Yang--Baxter equation~\eqref{lybef} 
determines the tetrahedron map
\begin{gather}
\label{msk}
\tm\cl \bK^3\to\bK^3,\qqquad
(x,y,z)\mapsto(\hat{x},\hat{y},\hat{z})=(x,\,ky+xz,\,z).
\end{gather}
As noticed in~\cite{Sergeev}, if $k\neq 0$, then the map~\eqref{msk} 
is invertible, and the inverse map 
\begin{gather}
\label{invms}
(\tm)^{-1}\cl \bK^3\to\bK^3,\qqquad
(\hat{x},\hat{y},\hat{z})\mapsto(x,y,z)=
\Big(\hat{x},\,\frac{1}{k}(\hat{y}-\hat{x}\hat{z}),\,\hat{z}\Big),
\end{gather}
satisfies the tetrahedron equation as well.

Now let $\sz\in\zsp$ and $\md=\mat_{\sz}(\bK)$.
To construct matrix generalisations of the map~\eqref{msk}, 
fix a nondegenerate $\sz\times\sz$ matrix $K\in\mat_{\sz}(\bK)$
and consider the $2\sz\times 2\sz$ matrix-function
$$
\mL_K\cl\mat_{\sz}(\bK)\to\mat_{2\sz}(\bK),\qqquad \mL_K(X)=
\begin{pmatrix}
     \mathbf{1}_{\sz}& X  \\
     0 & K
\end{pmatrix},\qqquad X\in\mat_{\sz}(\bK).
$$ 
We have 
\begin{gather*}
\mL_K^{12} (X) \cdot\mL_K^{13} (Y)\cdot\mL_K^{23}(Z)=
\begin{pmatrix}
     \mathbf{1}_{\sz} & X &  0 \\
     0 & K & 0 \\
		0 & 0 & \mathbf{1}_{\sz}
\end{pmatrix}
\begin{pmatrix}
     \mathbf{1}_{\sz} & 0 &  Y \\
     0 & \mathbf{1}_{\sz} & 0 \\
		0 & 0 & K
\end{pmatrix}
\begin{pmatrix}
    \mathbf{1}_{\sz} &0 &0\\ 
		0& \mathbf{1}_{\sz} & Z  \\
    0& 0 & K 
\end{pmatrix}=\begin{pmatrix}
    \mathbf{1}_{\sz} &X &XZ+YK\\ 
		0& K & KZ  \\
    0& 0 & K^2 
\end{pmatrix},\\
\mL_K^{23}(\hat{Z})\cdot\mL_K^{13}(\hat{Y})\cdot\mL_K^{12}(\hat{X})=
\begin{pmatrix}
    \mathbf{1}_{\sz} &0 &0\\ 
		0& \mathbf{1}_{\sz} & \hat{Z}  \\
    0& 0 & K 
\end{pmatrix}
\begin{pmatrix}
     \mathbf{1}_{\sz} & 0 &  \hat{Y} \\
     0 & \mathbf{1}_{\sz} & 0 \\
		0 & 0 & K
\end{pmatrix}
\begin{pmatrix}
     \mathbf{1}_{\sz} & \hat{X} &  0 \\
     0 & K & 0 \\
		0 & 0 & \mathbf{1}_{\sz}
\end{pmatrix}
=\begin{pmatrix}
     \mathbf{1}_{\sz} & \hat{X} &  \hat{Y} \\
     0 & K & \hat{Z}K \\
		0 & 0 & K^2
\end{pmatrix},\\
X,Y,Z,\hat{X},\hat{Y},\hat{Z}\in\mat_{\sz}(\bK).
\end{gather*}
The local Yang--Baxter equation 
$\mL_K^{12} (X) \cdot\mL_K^{13} (Y)\cdot\mL_K^{23}(Z)=
\mL_K^{23}(\hat{Z})\cdot\mL_K^{13}(\hat{Y})\cdot\mL_K^{12}(\hat{X})$
gives the map 
\begin{gather}
\label{mnK}
\tm_K\cl\big(\mat_{\sz}(\bK)\big)^3\to\big(\mat_{\sz}(\bK)\big)^3,\quad
(X,Y,Z)\mapsto(\hat{X},\hat{Y},\hat{Z})=(X,\,YK+XZ,\,KZK^{-1}). 
\end{gather}
It is easy to verify that the map~\eqref{mnK} satisfies the tetrahedron equation.
By Proposition~\ref{prinvt}, the inverse map 
\begin{gather}
\label{invmK}
\begin{gathered}
(\tm_K)^{-1}\cl\big(\mat_{\sz}(\bK)\big)^3\to\big(\mat_{\sz}(\bK)\big)^3,\\
(\hat{X},\hat{Y},\hat{Z})\mapsto (X,Y,Z)=
(\hat{X},\,\hat{Y}K^{-1}-\hat{X}K^{-1}\hat{Z},\,K^{-1}\hat{Z}K),
\end{gathered}
\end{gather}
obeys the tetrahedron equation as well.

Now consider the permutation map $P^{13}\cl\big(\mat_{\sz}(\bK)\big)^3\to\big(\mat_{\sz}(\bK)\big)^3$,
$\,(X,Y,Z)\mapsto(Z,Y,X)$.
Applying Proposition~\ref{pp13} to~\eqref{mnK},
we see that the map 
\begin{gather}
\label{tmnK}
\begin{gathered}
\tilde{\tm}_K=P^{13}\circ \tm_K\circ P^{13}
\cl\big(\mat_{\sz}(\bK)\big)^3\to\big(\mat_{\sz}(\bK)\big)^3,\\
(X,Y,Z)\mapsto(\hat{X},\hat{Y},\hat{Z})=(KXK^{-1},\,YK+ZX,\,Z), 
\end{gathered}
\end{gather}
also satisfies the tetrahedron equation. By Proposition~\ref{prinvt}, the inverse map 
\begin{gather}
\label{tinvmK}
\begin{gathered}
(\tilde{\tm}_K)^{-1}\cl\big(\mat_{\sz}(\bK)\big)^3\to\big(\mat_{\sz}(\bK)\big)^3,\\
(\hat{X},\hat{Y},\hat{Z})\mapsto (X,Y,Z)=
(K^{-1}\hat{X}K,\,\hat{Y}K^{-1}-\hat{Z}K^{-1}\hat{X},\,\hat{Z}),
\end{gathered}
\end{gather}
obeys the tetrahedron equation as well.
The maps~\eqref{mnK},~\eqref{tmnK} are matrix generalisations of~\eqref{msk}.
\end{example}

Let $\mathcal{A}$ be an associative ring with a unit.
Let $K\in\mathcal{A}$ be an invertible element.
Replacing $\mat_{\sz}(\bK)$ by~$\mathcal{A}$ 
in~\eqref{mnK}--\eqref{tinvmK}, we obtain the following result.
\begin{proposition}
\label{rnc}
For any associative ring $\mathcal{A}$ with a unit and any invertible element 
$K\in\mathcal{A}$, we have the tetrahedron maps
\begin{gather}
\label{amnK}
\tm_K\cl(\mathcal{A})^3\to(\mathcal{A})^3,\qqquad
(X,Y,Z)\mapsto(\hat{X},\hat{Y},\hat{Z})=(X,\,YK+XZ,\,KZK^{-1}),\\
\label{ainvmK}
(\tm_K)^{-1}\cl(\mathcal{A})^3\to(\mathcal{A})^3,\quad
(\hat{X},\hat{Y},\hat{Z})\mapsto (X,Y,Z)=
(\hat{X},\,\hat{Y}K^{-1}-\hat{X}K^{-1}\hat{Z},\,K^{-1}\hat{Z}K),\\
\label{atmnK}
\tilde{\tm}_K\cl(\mathcal{A})^3\to(\mathcal{A})^3,\qqquad
(X,Y,Z)\mapsto(\hat{X},\hat{Y},\hat{Z})=(KXK^{-1},\,YK+ZX,\,Z),\\
\label{atinvmK}
(\tilde{\tm}_K)^{-1}\cl(\mathcal{A})^3\to(\mathcal{A})^3,\quad
(\hat{X},\hat{Y},\hat{Z})\mapsto (X,Y,Z)=
(K^{-1}\hat{X}K,\,\hat{Y}K^{-1}-\hat{Z}K^{-1}\hat{X},\,\hat{Z}).
\end{gather}
One can say that \eqref{amnK}--\eqref{atinvmK} 
are tetrahedron maps in noncommuting variables $X$, $Y$, $Z$, $\hat{X}$, $\hat{Y}$, $\hat{Z}$.
\end{proposition}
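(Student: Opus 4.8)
The plan is to reduce the whole statement to a single verification. The four maps in \eqref{amnK}--\eqref{atinvmK} are built entirely from the ring operations of $\mathcal{A}$ (addition and multiplication) together with the inverse $K^{-1}$ of the invertible element $K$; no other feature of $\mat_{\sz}(\bK)$ enters their definition. Moreover, $\tilde{\tm}_K=P^{13}\circ\tm_K\circ P^{13}$, and the two remaining maps are the set-theoretic inverses of $\tm_K$ and $\tilde{\tm}_K$. Hence, once I show that $\tm_K$ from \eqref{amnK} satisfies the tetrahedron equation \eqref{tetr-eq} and is invertible with inverse \eqref{ainvmK}, Proposition~\ref{pp13} gives \eqref{atmnK} and Proposition~\ref{prinvt} gives \eqref{ainvmK} and \eqref{atinvmK}. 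So everything rests on the map $\tm_K$.

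To handle $\tm_K$ I would simply repeat, over $\mathcal{A}$, the computation that established \eqref{mnK} in Example~\ref{esmk}. Writing $\tm_K(X,Y,Z)=(X,\,YK+XZ,\,KZK^{-1})$ and expanding both sides of \eqref{tetr-eq} componentwise, one obtains a list of equalities between noncommutative expressions in the six arguments whose only ``constants'' are $K$ and $K^{-1}$. Each such equality is to be checked using nothing but associativity, distributivity, the unit of $\mathcal{A}$, and the relations $KK^{-1}=K^{-1}K=1$; since these are exactly the axioms available in an arbitrary associative unital ring with $K$ invertible, the verification carried out in the matrix case applies verbatim. I would emphasise that this is the safe route: one should not invoke a blanket principle that an identity valid in all matrix rings automatically holds in every associative ring, as the transfer of polynomial identities is a delicate matter; instead, the point is the concrete observation that the computation itself never leaves the ring axioms. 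The invertibility of $\tm_K$ and the formula \eqref{ainvmK} are confirmed by the same kind of short computation: solving $\hat X=X$, $\hat Y=YK+XZ$, $\hat Z=KZK^{-1}$ for $(X,Y,Z)$ yields $X=\hat X$, $Z=K^{-1}\hat Z K$, and $Y=\hat Y K^{-1}-\hat X K^{-1}\hat Z$.

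The only real work, and hence the main obstacle, is the bookkeeping in the noncommutative expansion of \eqref{tetr-eq} for $\tm_K$. In the scalar case \eqref{msk} one may commute factors freely, but over $\mathcal{A}$ one must keep the order of every product intact and track the conjugations $K(\cdot)K^{-1}$ and the mixed terms $YK+XZ$ as four copies of the map are composed; the two sides must then be matched term by term, cancelling $K$ against $K^{-1}$ only through $KK^{-1}=K^{-1}K=1$. Conceptually nothing new arises beyond this careful matching. Once $\tm_K$ is settled, \eqref{atmnK} follows immediately from Proposition~\ref{pp13}, and the inverse maps \eqref{ainvmK}, \eqref{atinvmK} follow from Proposition~\ref{prinvt}, completing the proof.
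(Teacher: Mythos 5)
Your proposal is correct and follows essentially the same route as the paper: verify the tetrahedron equation for $\tm_K$ by direct componentwise expansion (the paper does this for the matrix version in Example~\ref{esmk} and then observes the computation transfers verbatim to any associative unital ring), obtain \eqref{atmnK} via Proposition~\ref{pp13}, and the inverses \eqref{ainvmK}, \eqref{atinvmK} via Proposition~\ref{prinvt}. Your explicit remark that the verification never uses anything beyond associativity, distributivity, the unit, and $KK^{-1}=K^{-1}K=1$ is precisely the (implicit) justification behind the paper's ``replacing $\mat_{\sz}(\bK)$ by $\mathcal{A}$'' step.
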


\subsection{Tetrahedron maps associated with other matrix refactorisations} 
\label{stmomf}

Let ${\sz},\bm\in\zsp$.
Consider a set $\md$ and maps 
$\mA_{\sz},\mB_{\sz},\mC_{\sz},\mD_{\sz}\cl \md\to\mat_{\sz}(\bK)$.
One has the corresponding map
\begin{gather}
\label{Lsz}
\mL_{\sz}\cl \md\to\mat_{2{\sz}}(\bK),\qqquad
\mL_{\sz}(x)=\begin{pmatrix}
     \mA_{\sz}(x)& \mB_{\sz}(x)  \\
     \mC_{\sz}(x) & \mD_{\sz}(x)
\end{pmatrix},\qqquad x\in\md.
\end{gather}
Now we introduce the maps 
$\mL_{\sz,\bm}^{12},\mL_{\sz,\bm}^{13},\mL_{\sz,\bm}^{23}\cl \md\to\mat_{2{\sz}+\bm}(\bK)$ given by 
\begin{gather}
\label{L12bm}
\mL_{\sz,\bm}^{12}=\begin{pmatrix}
     \mA_{\sz}& \mB_{\sz} &  0 \\
     \mC_{\sz} & \mD_{\sz} & 0 \\
		0 & 0 & \mathbf{1}_{\bm}
\end{pmatrix},\quad
\mL_{\sz,\bm}^{13}=\begin{pmatrix}
     \mA_{\sz} & 0 &  \mB_{\sz} \\
     0 & \mathbf{1}_{\bm} & 0 \\
		\mC_{\sz} & 0 & \mD_{\sz}
\end{pmatrix},\quad
\mL_{\sz,\bm}^{23}=\begin{pmatrix}
    \mathbf{1}_{\bm} &0 &0\\ 
		0& \mA_{\sz} & \mB_{\sz}  \\
    0& \mC_{\sz} & \mD_{\sz} 
\end{pmatrix}.
\end{gather}
Consider the following matrix refactorisation equation
\begin{equation} 
\label{mrfe}
\mL_{\sz,\bm}^{12}(x) \cdot\mL_{\sz,\bm}^{13}(y)\cdot\mL_{\sz,\bm}^{23}(z) = 
\mL_{\sz,\bm}^{23}(\hat{z})\cdot\mL_{\sz,\bm}^{13}(\hat{y})\cdot\mL_{\sz,\bm}^{12}(\hat{x}),\qqquad
x,y,z,\hat{x},\hat{y},\hat{z}\in \md.
\end{equation}

In the case $\bm\neq\sz$ the matrices~\eqref{L12bm} are different from~\eqref{L1213}, 
and, therefore, the matrix refactorisation equation~\eqref{mrfe} 
does not coincide with the matrix local Yang--Baxter equation~\eqref{lybef}.
Below we present new tetrahedron maps arising from equation~\eqref{mrfe}.

Let $\md=\mat_{\sz}(\bK)$. We consider the map
\begin{gather}
\label{Lsz11}
\mL_{\sz}\cl \mat_{\sz}(\bK)\to\mat_{2{\sz}}(\bK),\qqquad
\mL_{\sz}(X)=\begin{pmatrix}
     \mathbf{1}_{\sz}& X  \\
     0 & \mathbf{1}_{\sz}
\end{pmatrix},\qqquad X\in\mat_{\sz}(\bK),
\end{gather}
which is of the form~\eqref{Lsz} with $\mA_{\sz}$, $\mB_{\sz}$, $\mC_{\sz}$, $\mD_{\sz}$ 
given by~\eqref{iabcd}. 
Taking the corresponding matrices~\eqref{L12bm}, 
we see that in the case~\eqref{Lsz11} equation~\eqref{mrfe} reads
\begin{multline} 
\label{mrfe11}
\begin{pmatrix}
     \mathbf{1}_{\sz}& X &  0 \\
     0 & \mathbf{1}_{\sz} & 0 \\
		0 & 0 & \mathbf{1}_{\bm}
\end{pmatrix}
\begin{pmatrix}
     \mathbf{1}_{\sz} & 0 &  Y \\
     0 & \mathbf{1}_{\bm} & 0 \\
		0 & 0 & \mathbf{1}_{\sz}
\end{pmatrix}
\begin{pmatrix}
    \mathbf{1}_{\bm} &0 &0\\ 
		0& \mathbf{1}_{\sz} & Z  \\
    0& 0 & \mathbf{1}_{\sz} 
\end{pmatrix}=\\
=
\begin{pmatrix}
    \mathbf{1}_{\bm} &0 &0\\ 
		0& \mathbf{1}_{\sz} & \hat{Z}  \\
    0& 0 & \mathbf{1}_{\sz} 
\end{pmatrix}
\begin{pmatrix}
     \mathbf{1}_{\sz} & 0 &  \hat{Y} \\
     0 & \mathbf{1}_{\bm} & 0 \\
		0 & 0 & \mathbf{1}_{\sz}
\end{pmatrix}
\begin{pmatrix}
     \mathbf{1}_{\sz}& \hat{X} &  0 \\
     0 & \mathbf{1}_{\sz} & 0 \\
		0 & 0 & \mathbf{1}_{\bm}
\end{pmatrix},\quad
X,Y,Z,\hat{X},\hat{Y},\hat{Z}\in\mat_{\sz}(\bK).
\end{multline}

For given $\sz\times\sz$ matrices $X$, $Y$, $Z$, 
equation~\eqref{mrfe11} does not determine $\sz\times\sz$ 
matrices $\hat{X}$, $\hat{Y}$, $\hat{Z}$ uniquely.
In order to get a map of the form $(X,Y,Z)\mapsto (\hat{X},\hat{Y},\hat{Z})$, 
we need to add some extra equations to~\eqref{mrfe11}.

For example, as shown below, 
one can get a map of the form $(X,Y,Z)\mapsto (\hat{X},\hat{Y},\hat{Z})$,
if one adds the equations $\hat{X}=X$, $\hat{Z}=Z$ to~\eqref{mrfe11}.
Substituting $\hat{X}=X$, $\hat{Z}=Z$ in~\eqref{mrfe11}, we obtain
\begin{multline} 
\label{xyzzvx}
\begin{pmatrix}
     \mathbf{1}_{\sz}& X &  0 \\
     0 & \mathbf{1}_{\sz} & 0 \\
		0 & 0 & \mathbf{1}_{\bm}
\end{pmatrix}
\begin{pmatrix}
     \mathbf{1}_{\sz} & 0 &  Y \\
     0 & \mathbf{1}_{\bm} & 0 \\
		0 & 0 & \mathbf{1}_{\sz}
\end{pmatrix}
\begin{pmatrix}
    \mathbf{1}_{\bm} &0 &0\\ 
		0& \mathbf{1}_{\sz} & Z  \\
    0& 0 & \mathbf{1}_{\sz} 
\end{pmatrix}=\\
=\begin{pmatrix}
    \mathbf{1}_{\bm} &0 &0\\ 
		0& \mathbf{1}_{\sz} & Z  \\
    0& 0 & \mathbf{1}_{\sz} 
\end{pmatrix}
\begin{pmatrix}
     \mathbf{1}_{\sz} & 0 &  \hat{Y} \\
     0 & \mathbf{1}_{\bm} & 0 \\
		0 & 0 & \mathbf{1}_{\sz}
\end{pmatrix}
\begin{pmatrix}
     \mathbf{1}_{\sz}& X &  0 \\
     0 & \mathbf{1}_{\sz} & 0 \\
		0 & 0 & \mathbf{1}_{\bm}
\end{pmatrix},\quad
X,Y,Z,\hat{X},\hat{Y},\hat{Z}\in\mat_{\sz}(\bK).
\end{multline}

Now we need to consider three cases: 
\begin{enumerate}
	\item $\bm\in\{1,\dots,\sz\}$,
	\item $\bm\in\{\sz+1,\dots,2\sz-1\}$,
	\item $\bm\ge 2\sz$.
\end{enumerate}
Below the elements of matrices $X$, $Z$ are denoted by $x_{k,l}$, $z_{k,l}$ 
for $k,l=1,\dots,\sz$, and we use the multiplication of a column by a row
$$
\begin{pmatrix}
x_{1,i}\\
x_{2,i}\\
\vdots\\
x_{\sz,i}
\end{pmatrix}(z_{j,1}\ z_{j,2}\ \dots\ z_{j,\sz})=
\begin{pmatrix}
    x_{1,i}z_{j,1} & x_{1,i}z_{j,2}&\hdots  &x_{1,i}z_{j,n} \\
		x_{2,i}z_{j,1} & x_{2,i}z_{j,2}&\hdots  & x_{2,i}z_{j,n} \\
		\vdots & \vdots & \ddots &\vdots\\
		x_{n,i}z_{j,1} & x_{n,i}z_{j,2}&\hdots  & x_{n,i}z_{j,n}
\end{pmatrix},\qqquad i,j\in\{1,\dots,\sz\}.
$$

Let $\bm\in\{1,\dots,\sz\}$. Then equation~\eqref{xyzzvx} is equivalent to
\begin{gather}
\hat{Y}=Y+\sum_{i=1}^\bm\begin{pmatrix}
x_{1,i}\\
x_{2,i}\\
\vdots\\
x_{\sz,i}
\end{pmatrix}(z_{\sz-\bm+i,1}\ z_{\sz-\bm+i,2}\ \dots\ z_{\sz-\bm+i,\sz}).
\end{gather}
Therefore, we obtain the following map
\begin{gather}
\label{mc1}
\begin{gathered}
\tm_{\sz,\bm}\cl\big(\mat_{\sz}(\bK)\big)^3\to\big(\mat_{\sz}(\bK)\big)^3,
\qqquad\bm\in\{1,\dots,\sz\},\\
(X,Y,Z)\mapsto(\hat{X},\hat{Y},\hat{Z})=
\left(X,\,Y+\sum_{i=1}^\bm\begin{pmatrix}
x_{1,i}\\
x_{2,i}\\
\vdots\\
x_{\sz,i}
\end{pmatrix}(z_{\sz-\bm+i,1}\ z_{\sz-\bm+i,2}\ \dots\ z_{\sz-\bm+i,\sz}),\,Z\right),\\
X,Y,Z\in\mat_{\sz}(\bK).
\end{gathered}
\end{gather}

Now let $\bm\in\{\sz+1,\dots,2\sz-1\}$.
Then \eqref{xyzzvx} is equivalent to
\begin{gather*}
\hat{Y}=Y+\sum_{i=1}^{2\sz-\bm}\begin{pmatrix}
x_{1,\bm-\sz+i}\\
x_{2,\bm-\sz+i}\\
\vdots\\
x_{\sz,\bm-\sz+i}
\end{pmatrix}(z_{i,1}\ z_{i,2}\ \dots\ z_{i,\sz}),
\end{gather*}
and we get the map
\begin{gather}
\label{mc2}
\begin{gathered}
\tilde{\tm}_{\sz,\bm}\cl\big(\mat_{\sz}(\bK)\big)^3\to\big(\mat_{\sz}(\bK)\big)^3,
\qqquad\bm\in\{\sz+1,\dots,2\sz-1\},\\
(X,Y,Z)\mapsto(\hat{X},\hat{Y},\hat{Z})=
\left(X,\,Y+\sum_{i=1}^{2\sz-\bm}\begin{pmatrix}
x_{1,\bm-\sz+i}\\
x_{2,\bm-\sz+i}\\
\vdots\\
x_{\sz,\bm-\sz+i}
\end{pmatrix}(z_{i,1}\ z_{i,2}\ \dots\ z_{i,\sz}),\,Z\right),\\
X,Y,Z\in\mat_{\sz}(\bK).
\end{gathered}
\end{gather}

In the case $\bm\ge 2\sz$ equation~\eqref{xyzzvx} is equivalent to $\hat{Y}=Y$,
and one obtains the identity map $(X,Y,Z)\mapsto(\hat{X},\hat{Y},\hat{Z})=(X,Y,Z)$.
One can verify by a straightforward computation that the maps~\eqref{mc1},~\eqref{mc2}
satisfy the tetrahedron equation.

\subsection{Liouville integrability}
\label{slint} 

In Definition~\ref{dli} we recall the standard notion of Liouville integrability for maps on manifolds
(see, e.g.,~\cite{fordy14,Sokor-Sasha,vesel1991} and references therein).
\begin{definition}
\label{dli}
Let $\dmp\in\zsp$. 
Let $\Mfd$ be a $\dmp$-dimensional manifold 
with (local) coordinates $\xc_1,\dots,\xc_\dmp$.
A (smooth or analytic) map $F\cl \Mfd\to \Mfd$ is said to be \emph{Liouville integrable} 
(or \emph{completely integrable}) if 
one has the following objects on the manifold~$\Mfd$.
\begin{itemize}
	\item A Poisson bracket $\{\,,\,\}$ which is 
	invariant under the map~$F$ and is of constant rank~$2r$ 
	for some positive integer~$r\le\dmp/2$ (i.e., the $\dmp\times\dmp$ matrix with the entries 
	$\{\xc_i,\xc_j\}$ is of constant rank~$2r$).
	The invariance of the bracket means the following.
	For any functions $g$, $h$ on~$\Mfd$ one has 
\begin{gather}
\label{ipbr}
	\{g,h\}\circ F=\{g\circ F,\,h\circ F\}.
\end{gather}
To prove that the bracket is invariant,
it is sufficient to check property~\eqref{ipbr} for $g=\xc_i$, $\,h=\xc_j$, $\,i,j=1,\dots,\dmp$.
	
	In our examples considered below the manifold has a system of coordinates 
	$\xc_1,\dots,\xc_{\dmp}$ such that for any $i,j=1,\dots,\dmp$ 
	the function $\{\xc_i,\xc_j\}$ is constant. 
	Then, in order to prove that the bracket is invariant under~$F$, 
	it is sufficient to show that $\{\xc_i\circ F,\,\xc_j\circ F\}=\{\xc_i,\xc_j\}$ for all $i$, $j$.
	
\item If $2r<\dmp$ then one needs also $\dmp-2r$ functions 
	\begin{gather}
\label{cfs}
C_s,\qqquad s=1,\dots,\dmp-2r,
\end{gather}
	which are invariant under~$F$ (i.e., $C_s\circ F=C_s$) 
	and are Casimir functions (i.e., $\{C_s,g\}=0$ for any function~$g$).
	\item One has $r$ functions 
	\begin{gather}
\label{finv}
I_l,\qqquad l=1,\dots,r,
\end{gather}
	which are invariant under~$F$
	and are in involution with respect to the Poisson bracket (i.e., $\{I_{l_1},I_{l_2}\}=0$ 
	for all $l_1,l_2=1,\dots,r$).
	\item The functions~\eqref{cfs},~\eqref{finv} must be functionally independent.
\end{itemize}	
\end{definition}
	
In this subsection we assume that $\bK$ is either $\bR$ or $\bC$.
Then the set $\big(\mat_{\sz}(\bK)\big)^3$ is a manifold.
Below the elements of matrices $X,Y,Z\in\big(\mat_{\sz}(\bK)\big)^3$ 
are denoted by $x_{i,j}$, $y_{i,j}$, $z_{i,j}$ for $i,j=1,\dots,\sz$.
Then 
\begin{gather}
\label{lcs}
x_{i,j},\quad y_{i,j},\quad z_{i,j},\qqquad
i,j=1,\dots,\sz, 
\end{gather}
can be regarded as coordinates 
on the $3\sz^2$-dimensional manifold $\big(\mat_{\sz}(\bK)\big)^3$.
Clearly, the functions $x_{i,j}$, $z_{i,j}$ are invariant under the maps~\eqref{mc1},~\eqref{mc2}.

\begin{theorem}
\label{emnmc1}
Let $\sz,\bm\in\zsp$ such that $1\le\bm\le\sz/2$.
Then the map~\eqref{mc1} is Liouville integrable.
\end{theorem}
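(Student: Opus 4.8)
The plan is to exploit the very simple structure of the map~\eqref{mc1}: it fixes $X$ and $Z$ and changes only $Y$, via $\hat Y = Y + X_L Z_B$, where $X_L$ denotes the submatrix formed by columns $1,\dots,\bm$ of $X$ and $Z_B$ the submatrix formed by rows $\sz-\bm+1,\dots,\sz$ of $Z$. In coordinates, $\hat y_{k,l}=y_{k,l}+\sum_{i=1}^{\bm}x_{k,i}z_{\sz-\bm+i,l}$, while all $x_{i,j}$ and $z_{i,j}$ are invariant. The decisive observation is that the correction term involves only the entries of $X_L$ and $Z_B$; the remaining entries of $X$ (columns $\bm+1,\dots,\sz$) and of $Z$ (rows $1,\dots,\sz-\bm$) are completely inert, in that they neither change nor enter the update of $Y$. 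I would first record these facts and split the coordinates into three groups: the $\sz^2$ entries $y_{k,l}$, the $2\sz\bm$ active entries of $X_L,Z_B$, and the $2\sz(\sz-\bm)$ inert entries.

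Next I would construct an explicit constant Poisson bracket on $\big(\mat_\sz(\bK)\big)^3$, in the spirit of the constant-bracket case of Definition~\ref{dli}. The idea is to treat the $y_{k,l}$ as momenta and pair them canonically with inert coordinates: choose $\sz^2$ of the inert entries, fix a bijection with the $\sz^2$ entries $y_{k,l}$, declare each such pair canonically conjugate (bracket $1$), and set every other bracket between coordinates equal to $0$. This is where the hypothesis $\bm\le\sz/2$ enters essentially: the number of inert entries is $2\sz(\sz-\bm)$, and $2\sz(\sz-\bm)\ge\sz^2$ holds precisely when $\sz\ge 2\bm$, so there are exactly enough inert coordinates to pair with all of the $y_{k,l}$. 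The resulting bracket is constant, hence satisfies the Jacobi identity automatically, and has constant rank $2r=2\sz^2$, giving $r=\sz^2\le 3\sz^2/2=\dmp/2$.

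It then remains to verify the three ingredients of Liouville integrability. For invariance of the bracket I would note that all entries of $X_L$ and $Z_B$ are Casimir functions (they have zero bracket with every coordinate by construction), so each entry of $X_L Z_B$, being a polynomial in Casimirs, is again a Casimir; thus the map acts by $\hat y_{k,l}=y_{k,l}+(\text{Casimir})$ and fixes all other coordinates, and a direct check of $\{\hat a,\hat b\}=\{a,b\}$ on coordinate pairs shows the bracket is preserved. For the Casimirs~\eqref{cfs} I would take the $\dmp-2r=\sz^2$ coordinates not used in the pairing (the entries of $X_L$, of $Z_B$, and the leftover inert entries); these are invariant, being $X,Z$ coordinates, and Casimir by construction. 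For the involutive integrals~\eqref{finv} I would take the $r=\sz^2$ inert coordinates that were paired with the $y_{k,l}$; they are invariant, pairwise Poisson-commute, and commute with the Casimirs. Since all of these $\dmp-r=2\sz^2$ functions are distinct coordinates, functional independence is immediate. The main point requiring care is the bookkeeping of the previous paragraph, namely identifying the inert coordinates correctly and checking that their count suffices exactly under $\bm\le\sz/2$, together with confirming that the shift $X_L Z_B$ lies in the subalgebra generated by the Casimirs, which is what guarantees invariance of the Poisson structure.
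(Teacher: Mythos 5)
Your proposal is correct and follows essentially the same route as the paper: a constant canonical bracket pairing each $y_{k,l}$ with an ``inert'' entry of $X$ or $Z$ (possible exactly when $2\sz(\sz-\bm)\ge\sz^2$, i.e.\ $\bm\le\sz/2$), with the active entries of $X$, $Z$ and the leftover inert entries as Casimirs, the paired inert coordinates as the involutive integrals, and invariance of the bracket following because the shift of $Y$ is a polynomial in Casimirs. The only difference is that the paper writes down one explicit bijection (formulas~\eqref{pbmc1},~\eqref{cfmc1},~\eqref{fimc1}) where you argue by counting that such a bijection exists, which is an inessential presentational choice.
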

\begin{proof}
On the manifold $\big(\mat_{\sz}(\bK)\big)^3$ we consider the Poisson bracket defined as follows.
The bracket of two coordinates from the list~\eqref{lcs} is nonzero only for
\begin{gather}
\label{pbmc1}
\begin{gathered}
\{y_{p,j},z_{p,j}\}=-\{z_{p,j},y_{p,j}\}=1,\qqquad p=1,\dots,\sz-\bm,\qquad
j=1,\dots,\sz,\\
\{y_{\sz-\bm+q,j},x_{j,\bm+q}\}=-\{x_{j,\bm+q},y_{\sz-\bm+q,j}\}=1,\qqquad q=1,\dots,\bm,\qquad
j=1,\dots,\sz.
\end{gathered}
\end{gather}
This Poisson bracket is of rank~$2\sz^2$.
The $\sz^2$ functions
\begin{gather}
\label{cfmc1}
z_{\sz-\bm+q,j},\quad x_{j,q},\quad x_{j,2\bm+s},\qquad
q=1,\dots,\bm,\quad s=1,\dots,\sz-2\bm,\quad j=1,\dots,\sz,
\end{gather}
are Casimir functions, since they do not appear in~\eqref{pbmc1}.
The rank of the bracket plus 
the number of the Casimir functions~\eqref{cfmc1} equals the dimension of the manifold. 

The $\sz^2$ functions
\begin{gather}
\label{fimc1}
z_{p,j},\qquad x_{j,\bm+q},\qqquad 
p=1,\dots,\sz-\bm,\qquad q=1,\dots,\bm,\qquad j=1,\dots,\sz,
\end{gather}
are in involution with respect to the Poisson bracket, 
as one has $\{x_{i,j},x_{i',j'}\}=\{z_{i,j},z_{i',j'}\}=\{x_{i,j},z_{i',j'}\}=0$ for all $i,j,i',j'=1,\dots,\sz$.
The functions~\eqref{cfmc1},~\eqref{fimc1} are functionally independent and 
are invariant under the map~\eqref{mc1}.

Let us show that the bracket~\eqref{pbmc1} is invariant as well.
In~\eqref{mc1} we have
\begin{gather}
\label{cmc1}
\hat{X}=X,\qqquad
\hat{Y}=Y+\sum_{i=1}^\bm\begin{pmatrix}
x_{1,i}\\
x_{2,i}\\
\vdots\\
x_{\sz,i}
\end{pmatrix}(z_{\sz-\bm+i,1}\ z_{\sz-\bm+i,2}\ \dots\ z_{\sz-\bm+i,\sz}),\qqquad
\hat{Z}=Z.
\end{gather}
The elements of matrices $\hat{X},\hat{Y},\hat{Z}\in\big(\mat_{\sz}(\bK)\big)^3$ 
are denoted by $\hat{x}_{i,j}$, $\hat{y}_{i,j}$, $\hat{z}_{i,j}$ for $i,j=1,\dots,\sz$.
The functions 
\begin{gather}
\label{xizi}
x_{1,i},\ x_{2,i},\ \dots,\ x_{\sz,i},\quad
z_{\sz-\bm+i,1},\ z_{\sz-\bm+i,2},\ \dots,\ z_{\sz-\bm+i,\sz},\qqquad i=1,\dots,\bm,
\end{gather}
which appear in~\eqref{cmc1}, belong to the list of the Casimir functions~\eqref{cfmc1}.
Therefore, formulas~\eqref{pbmc1},~\eqref{cmc1} yield
\begin{gather*}
\{\hat{x}_{i,j},\hat{x}_{i',j'}\}=\{x_{i,j},x_{i',j'}\},\quad
\{\hat{y}_{i,j},\hat{y}_{i',j'}\}=\{y_{i,j},y_{i',j'}\},\quad
\{\hat{z}_{i,j},\hat{z}_{i',j'}\}=\{z_{i,j},z_{i',j'}\},\\
\{\hat{x}_{i,j},\hat{y}_{i',j'}\}=\{x_{i,j},y_{i',j'}\},\quad
\{\hat{x}_{i,j},\hat{z}_{i',j'}\}=\{x_{i,j},z_{i',j'}\},\quad
\{\hat{y}_{i,j},\hat{z}_{i',j'}\}=\{y_{i,j},z_{i',j'}\},\\
i,j,i',j'=1,\dots,\sz,
\end{gather*}
which implies that the Poisson bracket is invariant under the map~\eqref{mc1}.

Therefore, the map~\eqref{mc1} in the case $1\le\bm\le\sz/2$ is Liouville integrable.
\end{proof}

\begin{example}
\label{en2m1}
Let $\sz=2$ and $\bm=1$. Then the map~\eqref{mc1} reads
\begin{gather*}
\begin{gathered}
\tm_{2,1}\cl\big(\mat_{2}(\bK)\big)^3\to\big(\mat_{2}(\bK)\big)^3,\\
(X,Y,Z)\mapsto(\hat{X},\hat{Y},\hat{Z})=
\left(X,\,Y+\begin{pmatrix}
    x_{11}z_{21} & x_{11}z_{22}\\
		x_{21}z_{21} & x_{21}z_{22}
\end{pmatrix},\,Z\right),\\
X=\begin{pmatrix}
    x_{11} & x_{12}\\
		x_{21} & x_{22}
\end{pmatrix},\qqquad 
Y=\begin{pmatrix}
    y_{11} & y_{12}\\
		y_{21} & y_{22}
\end{pmatrix},\qqquad
Z=\begin{pmatrix}
    z_{11} & z_{12}\\
		z_{21} & z_{22}
\end{pmatrix}.
\end{gathered}
\end{gather*}
One has the coordinates 
$x_{ij},y_{ij},z_{ij}$, $i,j\in\{1,2\}$, on the $12$-dimensional 
manifold $\big(\mat_{2}(\bK)\big)^3$.

The Poisson bracket~\eqref{pbmc1} reads
\begin{gather}
\label{pbsc22}
\begin{gathered}
\{y_{11},z_{11}\}=-\{z_{11},y_{11}\}=1,\qqquad
\{y_{12},z_{12}\}=-\{z_{12},y_{12}\}=1,\\
\{y_{21},x_{12}\}=-\{x_{12},y_{21}\}=1,\qqquad
\{y_{22},x_{12}\}=-\{x_{12},y_{22}\}=1,
\end{gathered}
\end{gather}
and is of rank~$8$.
The Casimir functions~\eqref{cfmc1} are $z_{21}$, $z_{22}$, $x_{11}$, $x_{21}$.
The functions~\eqref{fimc1} are $z_{11}$, $z_{12}$, $x_{12}$, $x_{22}$,
and they are in involution with respect to the bracket~\eqref{pbsc22}.
\end{example}

\begin{example}
\label{en5m2}
Now let $\sz=5$ and $\bm=2$. Then the map~\eqref{mc1} is
\begin{gather*}
\begin{gathered}
\tm_{5,2}\cl\big(\mat_{5}(\bK)\big)^3\to\big(\mat_{5}(\bK)\big)^3,\\
(X,Y,Z)\mapsto(\hat{X},\hat{Y},\hat{Z})=
\left(X,\,Y+
\begin{pmatrix}
          x_{11}z_{41}+x_{12}z_{51}& \hdots&x_{11}z_{45}+x_{12}z_{55}\\
          \vdots&\ddots&\vdots\\
          x_{51}z_{41}+x_{52}z_{51}&\hdots&x_{51}z_{45}+x_{52}z_{55}
\end{pmatrix},\,Z\right),\\
X=\begin{pmatrix}
          x_{11}& \hdots&x_{15}\\
          \vdots&\ddots&\vdots\\
          x_{51}&\hdots&x_{55}
\end{pmatrix},\qqquad 
Y=\begin{pmatrix}
          y_{11}& \hdots&y_{15}\\
          \vdots&\ddots&\vdots\\
          y_{51}&\hdots&y_{55}
\end{pmatrix},\qqquad
Z=\begin{pmatrix}
          z_{11}& \hdots&z_{15}\\
          \vdots&\ddots&\vdots\\
          z_{51}&\hdots&z_{55}
\end{pmatrix}.
\end{gathered}
\end{gather*}
One has the coordinates 
$x_{ij},y_{ij},z_{ij}$, $i,j\in\{1,\dots,5\}$, on the $75$-dimensional 
manifold $\big(\mat_{5}(\bK)\big)^3$.

The Poisson bracket~\eqref{pbmc1} reads
\begin{gather}
\label{pb52}
\begin{gathered}
\{y_{1,j},z_{1,j}\}=-\{z_{1,j},y_{1,j}\}=
\{y_{2,j},z_{2,j}\}=-\{z_{2,j},y_{2,j}\}=
\{y_{3,j},z_{3,j}\}=-\{z_{3,j},y_{3,j}\}=1,\\
\{y_{4,j},x_{j,3}\}=-\{x_{j,3},y_{4,j}\}=
\{y_{5,j},x_{j,4}\}=-\{x_{j,4},y_{5,j}\}=1,\qquad j=1,\dots,5,
\end{gathered}
\end{gather}
and is of rank~$50$.
The Casimir functions~\eqref{cfmc1} are $z_{4,j}$, $z_{5,j}$,
$x_{j,1}$, $x_{j,2}$, $x_{j,5}$, $j=1,\dots,5$.
The functions~\eqref{fimc1} are $z_{1,j}$, $z_{2,j}$, $z_{3,j}$, $x_{j,3}$, $x_{j,4}$,
and they are in involution with respect to the bracket~\eqref{pb52}.
\end{example}

\begin{theorem}
\label{emnmc2}
Let $\sz,\bm\in\zsp$ such that $3\sz/2\le\bm\le 2\sz-1$.
Then the map~\eqref{mc2} is Liouville integrable.
\end{theorem}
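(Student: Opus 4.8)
The plan is to exploit the close structural parallel between the map~\eqref{mc2} and the map~\eqref{mc1}. Writing $\bm'=2\sz-\bm$, the hypothesis $3\sz/2\le\bm\le 2\sz-1$ is \emph{exactly} $1\le\bm'\le\sz/2$, which is the hypothesis of Theorem~\ref{emnmc1}; moreover, after the reflection $i\mapsto\bm'+1-i$ of the summation index the correction term $\sum_{i=1}^{2\sz-\bm}x_{k,\bm-\sz+i}z_{i,l}$ of~\eqref{mc2} becomes $\sum_{i=1}^{\bm'}x_{k,\sz-\bm'+i}z_{i,l}$, which is the correction term of~\eqref{mc1} with $\bm$ replaced by $\bm'$, save that the summed columns of $X$ and rows of $Z$ are reflected from the start to the end of the index range. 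This strongly suggests reducing to Theorem~\ref{emnmc1}. Let $J$ be the $\sz\times\sz$ reversal matrix (the anti-identity, with $1$'s on the antidiagonal), and consider the linear diffeomorphism $\Psi\cl\big(\mat_\sz(\bK)\big)^3\to\big(\mat_\sz(\bK)\big)^3$, $\Psi(X,Y,Z)=(XJ,\,Y,\,JZ)$. Right multiplication by $J$ reverses the columns of $X$ and left multiplication reverses the rows of $Z$, so a direct index check gives $\Psi\circ\tilde{\tm}_{\sz,\bm}=\tm_{\sz,\bm'}\circ\Psi$, that is $\tilde{\tm}_{\sz,\bm}=\Psi^{-1}\circ\tm_{\sz,\bm'}\circ\Psi$ with $\bm'=2\sz-\bm$.

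Since $\Psi$ is a diffeomorphism, Liouville integrability is preserved under this conjugation: one pulls back the invariant Poisson bracket~\eqref{pbmc1}, the Casimir functions~\eqref{cfmc1}, and the involutive integrals~\eqref{fimc1} of $\tm_{\sz,\bm'}$ along $\Psi$. Because $\Psi$ merely permutes the coordinates~\eqref{lcs} (reversing the column index of $X$, the row index of $Z$, and fixing $Y$), the pulled-back bracket again has constant coefficients, its rank $2\sz^2$ is unchanged, the invariance relations $C_s\circ\Psi\circ\tilde{\tm}_{\sz,\bm}=C_s\circ\Psi$ and the analogous ones for the integrals hold, involution is preserved, and functional independence is preserved. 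As $1\le\bm'\le\sz/2$, Theorem~\ref{emnmc1} applies to $\tm_{\sz,\bm'}$, and the claim for~\eqref{mc2} follows at once.

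Alternatively, one can mirror the proof of Theorem~\ref{emnmc1} verbatim, which is useful for seeing directly where the hypothesis enters. I would declare the only nonzero brackets of the coordinates~\eqref{lcs} to be $\{y_{p,j},z_{p,j}\}=1$ for $p=\bm'+1,\dots,\sz$ and $\{y_{q,j},x_{j,\sz-2\bm'+q}\}=1$ for $q=1,\dots,\bm'$ (with $j=1,\dots,\sz$), take as Casimirs the $z_{i,j}$ with $i\le\bm'$, the $x_{j,c}$ with $c>\sz-\bm'$, and the leftover $x_{j,c}$ with $c\le\sz-2\bm'$, and as involutive integrals the $z_{p,j}$ with $p>\bm'$ together with the $x_{j,\sz-2\bm'+q}$. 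Every $x$- and $z$-entry occurring in the correction term of~\eqref{mc2} is then a Casimir, so $\{\hat y_{k,l},\,\cdot\,\}=\{y_{k,l},\,\cdot\,\}$ exactly as in Theorem~\ref{emnmc1}, yielding invariance of the bracket; involution is automatic because all integrals are $x$- or $z$-coordinates and $\{x,x\}=\{z,z\}=\{x,z\}=0$. The hypothesis is used precisely to place the $\bm'$ momentum columns $\sz-2\bm'+1,\dots,\sz-\bm'$ of $X$ disjointly from the $\bm'$ summed columns $\sz-\bm'+1,\dots,\sz$, which requires $2\bm'\le\sz$, i.e.\ $\bm\ge 3\sz/2$.

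The main obstacle in either route is bookkeeping rather than conceptual. For the reduction, the delicate point is verifying the identity $\Psi\circ\tilde{\tm}_{\sz,\bm}=\tm_{\sz,\bm'}\circ\Psi$ by tracking how the reflection $i\mapsto\bm'+1-i$ of the summation index interacts with the column and row reversals; for the direct construction, the corresponding care is needed to confirm that the summed entries coincide with the Casimir ones and that the $\sz^2$ Casimirs plus the rank $2\sz^2$ exhaust the $3\sz^2$-dimensional manifold, ensuring functional independence of~\eqref{cfmc1}--\eqref{fimc1} in the reflected indexing.
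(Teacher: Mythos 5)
Your proposal is correct, and your primary route is genuinely different from the paper's. The paper proves Theorem~\ref{emnmc2} directly, by exhibiting a constant-coefficient Poisson bracket~\eqref{pbmc2} of rank $2\sz^2$, the $\sz^2$ Casimirs~\eqref{cfmc2}, and the $\sz^2$ involutive invariants~\eqref{fimc2}, the key observation being (exactly as in Theorem~\ref{emnmc1}) that every entry $x_{k,\bm-\sz+i}$ and $z_{i,l}$ occurring in the correction term of~\eqref{mc2} is a Casimir, so the bracket is invariant. Your second, ``mirror'' route is essentially this proof with a slightly different but equally valid pairing of conjugate coordinates (you pair $y_{q,j}$ with $x_{j,\sz-2\bm'+q}$ where the paper, in the notation $\bm'=2\sz-\bm$, pairs $y_{\sz-\bm'+q,j}$ with $x_{j,q}$); both choices make the summed entries Casimirs and both need $2\bm'\le\sz$ for the momentum columns of $X$ to avoid the summed columns. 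Your first route --- the conjugation $\tilde{\tm}_{\sz,\bm}=\Psi^{-1}\circ\tm_{\sz,2\sz-\bm}\circ\Psi$ with $\Psi(X,Y,Z)=(XJ,Y,JZ)$ --- is not in the paper and is arguably cleaner: I checked the identity ($(XJ)_{k,i}=x_{k,\sz+1-i}$ and $(JZ)_{\sz-\bm'+i,l}=z_{\bm'+1-i,l}$, and the substitution $i\mapsto\bm'+1-i$ turns $\sum_i x_{k,\sz+1-i}z_{\bm'+1-i,l}$ into $\sum_i x_{k,\sz-\bm'+i}z_{i,l}$, which is the correction term of~\eqref{mc2}), and since $\Psi$ is a coordinate permutation, pulling back the data of Theorem~\ref{emnmc1} transfers rank, invariance, involution, and independence verbatim. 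This buys a genuine reduction of Theorem~\ref{emnmc2} to Theorem~\ref{emnmc1} and explains why the two integrability ranges $1\le\bm\le\sz/2$ and $3\sz/2\le\bm\le2\sz-1$ correspond under $\bm\mapsto2\sz-\bm$; the paper's direct construction, by contrast, keeps the proof self-contained and displays the bracket for~\eqref{mc2} explicitly. One cosmetic remark: the equality $\bm-\sz+i=\sz-\bm'+i$ is an identity, so no reflection of the summation index is needed at the step where you first rewrite the correction term of~\eqref{mc2}; the reflection only enters later, when matching against~\eqref{mc1} after applying $J$. This does not affect the argument.
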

\begin{proof}
Now on the manifold $\big(\mat_{\sz}(\bK)\big)^3$ we consider the following Poisson bracket.
The bracket of two coordinates from the list~\eqref{lcs} is nonzero only for
\begin{gather}
\label{pbmc2}
\begin{gathered}
\{y_{p,j},z_{2\sz-\bm+p,j}\}=-\{z_{2\sz-\bm+p,j},y_{p,j}\}=1,\qqquad p=1,\dots,\bm-\sz,\qquad
j=1,\dots,\sz,\\
\{y_{\bm-\sz+q,j},x_{j,q}\}=-\{x_{j,q},y_{\sz-\bm+q,j}\}=1,\qqquad q=1,\dots,2\sz-\bm,\qquad
j=1,\dots,\sz.
\end{gathered}
\end{gather}
This bracket is of rank~$2\sz^2$.
The $\sz^2$ functions
\begin{gather}
\label{cfmc2}
z_{q,j},\quad x_{j,2\sz-\bm+p},\qqquad
q=1,\dots,2\sz-\bm,\quad p=1,\dots,\bm-\sz,\quad j=1,\dots,\sz,
\end{gather}
are Casimir functions, as they do not appear in~\eqref{pbmc2}.
The rank of the bracket plus 
the number of the Casimir functions~\eqref{cfmc2} equals the dimension of the manifold. 

The $\sz^2$ functions
\begin{gather}
\label{fimc2}
z_{2\sz-\bm+p,j},\quad x_{j,q},\qqquad
p=1,\dots,\bm-\sz,\qquad
q=1,\dots,2\sz-\bm,\qquad j=1,\dots,\sz,
\end{gather}
are in involution with respect to the bracket, 
since we have $\{x_{i,j},x_{i',j'}\}=\{z_{i,j},z_{i',j'}\}=\{x_{i,j},z_{i',j'}\}=0$ for all $i,j,i',j'=1,\dots,\sz$.
The functions~\eqref{cfmc2},~\eqref{fimc2} are functionally independent and 
are invariant under the map~\eqref{mc2}.

Similarly to the proof of Theorem~\ref{emnmc1}, 
one shows that the bracket~\eqref{pbmc2} is invariant as well.
Hence the map~\eqref{mc2} in the case $3\sz/2\le\bm\le 2\sz-1$ is Liouville integrable.
\end{proof}

\begin{example}
\label{en2m3}
Let $\sz=2$ and $\bm=3$. Then the map~\eqref{mc2} reads
\begin{gather*}
\begin{gathered}
\tilde{\tm}_{2,3}\cl\big(\mat_{2}(\bK)\big)^3\to\big(\mat_{2}(\bK)\big)^3,\\
(X,Y,Z)\mapsto(\hat{X},\hat{Y},\hat{Z})=
\left(X,\,Y+\begin{pmatrix}
    x_{12}z_{11} & x_{12}z_{12}\\
		x_{22}z_{11} & x_{22}z_{12}
\end{pmatrix},\,Z\right),\\
X=\begin{pmatrix}
    x_{11} & x_{12}\\
		x_{21} & x_{22}
\end{pmatrix},\qqquad 
Y=\begin{pmatrix}
    y_{11} & y_{12}\\
		y_{21} & y_{22}
\end{pmatrix},\qqquad
Z=\begin{pmatrix}
    z_{11} & z_{12}\\
		z_{21} & z_{22}
\end{pmatrix}.
\end{gathered}
\end{gather*}
One has the coordinates 
$x_{ij},y_{ij},z_{ij}$, $i,j\in\{1,2\}$, on the $12$-dimensional 
manifold $\big(\mat_{2}(\bK)\big)^3$.

The Poisson bracket~\eqref{pbmc2} reads
\begin{gather}
\label{pbsc23}
\begin{gathered}
\{y_{11},z_{21}\}=-\{z_{21},y_{11}\}=1,\qqquad
\{y_{12},z_{22}\}=-\{z_{22},y_{12}\}=1,\\
\{y_{21},x_{11}\}=-\{x_{11},y_{21}\}=1,\qqquad
\{y_{22},x_{21}\}=-\{x_{21},y_{22}\}=1,
\end{gathered}
\end{gather}
and is of rank~$8$.
The Casimir functions~\eqref{cfmc2} are $z_{11}$, $z_{12}$, $x_{12}$, $x_{22}$.
The functions~\eqref{fimc2} are $z_{21}$, $z_{22}$, $x_{11}$, $x_{21}$,
and they are in involution with respect to the bracket~\eqref{pbsc23}.
\end{example}

\begin{example}
\label{en5m8}
Now let $\sz=5$ and $\bm=8$. Then the map~\eqref{mc2} is
\begin{gather*}
\begin{gathered}
\tilde{\tm}_{5,8}\cl\big(\mat_{5}(\bK)\big)^3\to\big(\mat_{5}(\bK)\big)^3,\\
(X,Y,Z)\mapsto(\hat{X},\hat{Y},\hat{Z})=
\left(X,\,Y+
\begin{pmatrix}
          x_{14}z_{11}+x_{15}z_{21}& \hdots&x_{14}z_{15}+x_{15}z_{25}\\
          \vdots&\ddots&\vdots\\
          x_{54}z_{11}+x_{55}z_{21}&\hdots&x_{55}z_{15}+x_{55}z_{25}
\end{pmatrix},\,Z\right),\\
X=\begin{pmatrix}
          x_{11}& \hdots&x_{15}\\
          \vdots&\ddots&\vdots\\
          x_{51}&\hdots&x_{55}
\end{pmatrix},\qqquad 
Y=\begin{pmatrix}
          y_{11}& \hdots&y_{15}\\
          \vdots&\ddots&\vdots\\
          y_{51}&\hdots&y_{55}
\end{pmatrix},\qqquad
Z=\begin{pmatrix}
          z_{11}& \hdots&z_{15}\\
          \vdots&\ddots&\vdots\\
          z_{51}&\hdots&z_{55}
\end{pmatrix}.
\end{gathered}
\end{gather*}
One has the coordinates 
$x_{ij},y_{ij},z_{ij}$, $i,j\in\{1,\dots,5\}$, on the $75$-dimensional 
manifold $\big(\mat_{5}(\bK)\big)^3$.

The Poisson bracket~\eqref{pbmc2} reads
\begin{gather}
\label{pb58}
\begin{gathered}
\{y_{1,j},z_{3,j}\}=-\{z_{3,j},y_{1,j}\}=
\{y_{2,j},z_{4,j}\}=-\{z_{4,j},y_{2,j}\}=
\{y_{3,j},z_{5,j}\}=-\{z_{5,j},y_{3,j}\}=1,\\
\{y_{4,j},x_{j,1}\}=-\{x_{j,1},y_{4,j}\}=
\{y_{5,j},x_{j,2}\}=-\{x_{j,2},y_{5,j}\}=1,\qquad j=1,\dots,5,
\end{gathered}
\end{gather}
and is of rank~$50$.
The Casimir functions~\eqref{cfmc2} are $z_{1,j}$, $z_{2,j}$,
$x_{j,3}$, $x_{j,4}$, $x_{j,5}$, $j=1,\dots,5$.
The functions~\eqref{fimc2} are $z_{3,j}$, $z_{4,j}$, $z_{5,j}$, $x_{j,1}$, $x_{j,2}$,
and they are in involution with respect to the bracket~\eqref{pb58}.
\end{example}

\section{Tetrahedron maps on groups}
\label{stmgr}

Let $\gmd$ be a group. 
It is known that one has the following Yang--Baxter maps
\begin{subequations}
\label{ybmgr}
\begin{gather}
\label{ym1}
\ym_1\colon \gmd\times \gmd\to \gmd\times \gmd,\qqquad
\ym_1(x,y)=(yxy,y^{-1}),\\
\label{ym2}
\ym_2\colon \gmd\times \gmd\to \gmd\times \gmd,\qqquad
\ym_2(x,y)=(x^{-1},xyx),\\
\label{ym3}
\ym_3\colon\gmd\times \gmd\to \gmd\times \gmd,\qqquad
\ym_3(x,y)=(yx^{-1}y,y),\\
\label{ym4}
\ym_4\colon\gmd\times \gmd\to \gmd\times \gmd,\qqquad
\ym_4(x,y)=(x,xy^{-1}x).
\end{gather}
\end{subequations}
(see, e.g.,~\cite{carter2006,Ito2013} and references therein).
In Theorem~\ref{ptmgr} we present tetrahedron maps of similar type.
\begin{theorem}
\label{ptmgr}
For any group~$\gmd$, one has the following tetrahedron maps
\begin{subequations}
\label{tmgr}
\begin{gather}
\label{yxyz}
\tm_1\colon \gmd^3\to \gmd^3,\qqquad
\tm_1(x,y,z)=(yxy,y^{-1},z^{-1}),\\
\label{tyxyz}
\check{\tm}_1\colon \gmd^3\to \gmd^3,\qqquad
\check{\tm}_1(x,y,z)=(x^{-1},zyz,z^{-1}),\\
\label{x1xy}
\tm_2\colon \gmd^3\to \gmd^3,\qqquad
\tm_2(x,y,z)=(x^{-1},xyx,z^{-1}),\\
\label{tx1xy}
\check{\tm}_2\colon \gmd^3\to \gmd^3,\qqquad
\check{\tm}_2(x,y,z)=(x^{-1},y^{-1},yzy),\\
\label{yx1y}
\tm_3\colon\gmd^3\to \gmd^3,\qqquad
\tm_3(x,y,z)=(yx^{-1}y,y,z^{-1}),\\
\label{tyx1y}
\check{\tm}_3\colon \gmd^3\to \gmd^3,\qqquad
\check{\tm}_3(x,y,z)=(x^{-1},zy^{-1}z,z),\\
\label{xxy1}
\tm_4\colon\gmd^3\to \gmd^3,\qqquad
\tm_4(x,y,z)=(x,xy^{-1}x,z^{-1}),\\
\label{txxy1}
\check{\tm}_4\colon \gmd^3\to \gmd^3,\qqquad
\check{\tm}_4(x,y,z)=(x^{-1},y,yz^{-1}y).
\end{gather}
\end{subequations}
\end{theorem}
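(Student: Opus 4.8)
The plan is to avoid any direct substitution into the tetrahedron equation~\eqref{tetr-eq} and instead realise each of the eight maps in~\eqref{tmgr} as an output of the construction in Proposition~\ref{pist}, applied to the elementary tetrahedron maps that Proposition~\ref{prop_YB_tetr} produces from the group Yang--Baxter maps~\eqref{ybmgr}. The distinguished involution will be inversion: let $\sigma\cl\gmd\to\gmd$, $\sigma(g)=g^{-1}$, so that $\sigma\circ\sigma=\id_\gmd$. First I would record, via Proposition~\ref{prop_YB_tetr}, that each of the maps $\ym_i\times\id_\gmd$ and $\id_\gmd\times\ym_i$, $i=1,2,3,4$, is a tetrahedron map on $\gmd^3$; this is the entire supply of ingredients I need.

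Next I would check that $\sigma$ is an involutive symmetry of every $\ym_i$, i.e.\ $(\sigma\times\sigma)\circ\ym_i\circ(\sigma\times\sigma)=\ym_i$. For example, $(\sigma\times\sigma)\circ\ym_1\circ(\sigma\times\sigma)(x,y)=(\sigma\times\sigma)\big(y^{-1}x^{-1}y^{-1},\,y\big)=(yxy,\,y^{-1})=\ym_1(x,y)$, and the cases $i=2,3,4$ are analogous one-line identities in $\gmd$. It then follows at once that $\sigma$ is an involutive symmetry of each tetrahedron map $\ym_i\times\id_\gmd$ and $\id_\gmd\times\ym_i$ in the sense of~\eqref{stss}, because on the spectator factor one has $\sigma\circ\id_\gmd\circ\sigma=\id_\gmd$, so the hypotheses of Proposition~\ref{pist} hold for all eight of these tetrahedron maps.

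With those hypotheses in force, I would form the map $\hat{T}=(\id_\gmd\times\sigma\times\id_\gmd)\circ T\circ(\sigma\times\id_\gmd\times\sigma)$ for each $T=\ym_i\times\id_\gmd$ and each $T=\id_\gmd\times\ym_i$, and match the result against~\eqref{tmgr}. Taking $T=\ym_i\times\id_\gmd$ gives $\hat{T}(x,y,z)=\big(p,\,q^{-1},\,z^{-1}\big)$ where $(p,q)=\ym_i(x^{-1},y)$; substituting the four formulas~\eqref{ym1}--\eqref{ym4} produces exactly $\tm_1,\dots,\tm_4$ (explicitly, $\ym_3\times\id_\gmd$ yields $\tm_1$, $\ym_4\times\id_\gmd$ yields $\tm_2$, $\ym_1\times\id_\gmd$ yields $\tm_3$, $\ym_2\times\id_\gmd$ yields $\tm_4$). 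Symmetrically, $T=\id_\gmd\times\ym_i$ gives $\hat{T}(x,y,z)=\big(x^{-1},\,a^{-1},\,b\big)$ with $(a,b)=\ym_i(y,z^{-1})$, and this yields $\check{\tm}_3,\check{\tm}_4,\check{\tm}_1,\check{\tm}_2$ for $i=1,2,3,4$ respectively. Since Proposition~\ref{pist} guarantees that each such $\hat{T}$ satisfies the tetrahedron equation, all eight maps in~\eqref{tmgr} are tetrahedron maps.

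The only genuine labour is bookkeeping: carrying out the eight substitutions and confirming that the resulting expressions coincide verbatim with~\eqref{tmgr}, together with the four short group identities establishing that $\sigma$ is a symmetry. The one conceptual step I would single out — and the step that makes the whole argument work without ever expanding~\eqref{tetr-eq} — is recognising that inversion is the correct involutive symmetry, and that the spectator inversions $x\mapsto x^{-1}$ and $z\mapsto z^{-1}$ visible in~\eqref{tmgr} are precisely what the outer factors $(\sigma\times\id_\gmd\times\sigma)$ and $(\id_\gmd\times\sigma\times\id_\gmd)$ of Proposition~\ref{pist} generate from the untouched third (respectively first) coordinate of $\ym_i\times\id_\gmd$ (respectively $\id_\gmd\times\ym_i$). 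Once this is seen, no independent verification of the tetrahedron equation is required.
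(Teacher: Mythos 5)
Your proof is correct and follows essentially the same route as the paper: both derive the eight maps by applying Proposition~\ref{pist} with $\sigma(g)=g^{-1}$ to the tetrahedron maps $\ym_i\times\id_{\gmd}$ and $\id_{\gmd}\times\ym_i$ supplied by Proposition~\ref{prop_YB_tetr}. The only (immaterial) difference is that you use the $\hat{T}$ conjugation from Proposition~\ref{pist} where the paper's worked example uses $\tilde{T}$; since that proposition provides both, and your bookkeeping of which $\ym_i$ yields which map in~\eqref{tmgr} checks out, nothing further is needed.
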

\begin{proof}
Applying Proposition~\ref{prop_YB_tetr} to the Yang--Baxter map~\eqref{ym1},
we obtain the following tetrahedron maps
\begin{gather}
\label{syxy}
T=\ym_1\times\id_{\gmd}\colon \gmd^3\to \gmd^3,\qquad
T(x,y,z)=(yxy,y^{-1},z),\\
\label{csyxy}
\check{T}=\id_{\gmd}\times\ym_1\colon 
\gmd^3\to \gmd^3,\qquad
\check{T}(x,y,z)=(x,zyz,z^{-1}).
\end{gather}
Consider
\begin{gather}
\label{sigm}
\sigma\cl\gmd\to\gmd,\qqquad \sigma(x)=x^{-1}.
\end{gather}
For~\eqref{sigm},~\eqref{syxy},~\eqref{csyxy} one has
\begin{gather*}
\sigma\circ\sigma=\id_{\gmd},\qqquad
(\sigma\times\sigma\times\sigma)\circ T\circ(\sigma\times\sigma\times\sigma)=T,
\qqquad 
(\sigma\times\sigma\times\sigma)\circ\check{T}\circ(\sigma\times\sigma\times\sigma)=\check{T}.
\end{gather*}
Therefore, we can apply Proposition~\ref{pist} to the tetrahedron maps~\eqref{syxy},~\eqref{csyxy}, 
which gives the maps~\eqref{yx1y},~\eqref{tyx1y}. 
Indeed, for any $x,y,z\in\gmd$ one has
\begin{gather*}
\big((\sigma\times\id_{\gmd}\times\sigma)\circ T\circ(\id_{\gmd}\times\sigma\times\id_{\gmd})\big)(x,y,z)
=(yx^{-1}y,y,z^{-1})=
\tm_3(x,y,z),\\
\big((\sigma\times\id_{\gmd}\times\sigma)\circ\check{T}
\circ(\id_{\gmd}\times\sigma\times\id_{\gmd})\big)(x,y,z)=
(x^{-1},zy^{-1}z,z)=\check{\tm}_3(x,y,z).
\end{gather*}
Therefore, by Proposition~\ref{pist}, the maps~\eqref{yx1y},~\eqref{tyx1y} satisfy the tetrahedron equation.

Similarly, applying Proposition~\ref{prop_YB_tetr} to the 
Yang--Baxter maps~\eqref{ybmgr}, one obtains the tetrahedron maps
\begin{gather}
\label{ymid}
\ym_i\times\id_{\gmd}\cl\gmd^3\to\gmd^3,\qqquad
\id_{\gmd}\times\ym_i\colon\gmd^3\to\gmd^3,\qqquad i=1,2,3,4.
\end{gather}
Applying Proposition~\ref{pist} to the tetrahedron maps~\eqref{ymid}
with $\sigma$ given by~\eqref{sigm}, one obtains all the maps~\eqref{tmgr}.
Therefore, the maps~\eqref{tmgr} satisfy the tetrahedron equation.
\end{proof}

\begin{remark}
\label{rneyb}
Using~\eqref{sigm} and the Yang--Baxter maps~\eqref{ybmgr} for a group $\gmd$, 
one can rewrite the tetrahedron maps~\eqref{tmgr} as follows
$\tm_i=\ym_i\times\sigma$, $\,\check{\tm}_i=\sigma\times\ym_i$, $\,i=1,2,3,4$.

Note that not for every Yang--Baxter map 
$\ym\cl\gmd\times\gmd\to\gmd\times\gmd$ the maps 
$\ym\times\sigma$ and $\sigma\times\ym$ satisfy the tetrahedron equation.
For example, consider the well-known Yang--Baxter map 
$$
\hat{\ym}\cl\gmd\times\gmd\to\gmd\times\gmd,\qqquad 
\hat{\ym}(x,y)=(x,xyx^{-1}),
$$
which appeared in~\cite{Drin92}. The corresponding maps 
\begin{gather*}
\hat{\ym}\times\sigma\cl\gmd^3\to\gmd^3,\qqquad
(\hat{\ym}\times\sigma)(x,y,z)=(x,xyx^{-1},z^{-1}),\\
\sigma\times\hat{\ym}\cl\gmd^3\to\gmd^3,\qqquad
(\sigma\times\hat{\ym})(x,y,z)=(x^{-1},y,yzy^{-1}),
\end{gather*}
do not satisfy the tetrahedron equation.
\end{remark}

\begin{remark}
For a group $\gmd$, consider
\begin{gather*}
P^{13}\cl \gmd^3\to \gmd^3,\qqquad
P^{13}(a_1,a_2,a_3)=(a_3,a_2,a_1),\qquad a_i\in \gmd.
\end{gather*}
According to Proposition~\ref{pp13}, for any tetrahedron map $T\cl\gmd^3\to\gmd^3$, 
the map $\tilde{T}=P^{13} T P^{13}$ obeys the tetrahedron equation as well.
For the maps~\eqref{tmgr} one has
\begin{gather*}
P^{13} \tm_1 P^{13}=\check{\tm}_2,\qquad
P^{13}\check{\tm}_1 P^{13}={\tm}_2,\qquad
P^{13}{\tm}_2 P^{13}=\check{\tm}_1,\qquad
P^{13} \check{\tm}_2 P^{13}=\tm_1,\\
P^{13} \tm_3 P^{13}=\check{\tm}_4,\qquad
P^{13}\check{\tm}_3 P^{13}={\tm}_4,\qquad
P^{13}{\tm}_4 P^{13}=\check{\tm}_3,\qquad
P^{13} \check{\tm}_4 P^{13}=\tm_3.
\end{gather*}
\end{remark}

\section{Differentials and linear approximations of tetrahedron maps}
\label{sdtm}

In this section, 
when we consider maps of manifolds, 
we assume that they are either smooth, or complex-analytic, or rational,
so that the differential is defined for such a map.

Let $\md$ be a manifold. 
Consider the tangent bundle $\tau\cl T\md\to\md$. 
Then the bundle 
$$
\tau\times\tau\times\tau\cl T\md\times T\md\times T\md\to \md\times \md\times \md
$$ 
can be identified with the tangent bundle of the manifold $\md\times \md\times \md$.
Using this identification and the general procedure 
to define the differential of a map of manifolds, for any map 
$$
{\tm}\cl\md\times \md\times\md\to\md\times\md\times\md
$$
we obtain the differential 
$\dft{\tm}\cl T\md\times T\md\times T\md\to T\md\times T\md\times T\md$.
Proposition~\ref{thdybtet} is proved in~\cite{ikkp21}.
\begin{proposition}[\cite{ikkp21}]
\label{thdybtet}
Let $\md$ be a manifold.
For any tetrahedron map 
${\tm}\cl\md\times \md\times\md\to\md\times\md\times\md$, 
the differential 
$$
\dft{\tm}\cl T\md\times T\md\times T\md\to T\md\times T\md\times T\md
$$ 
is a tetrahedron map of the manifold $T\md\times T\md\times T\md$.
\end{proposition}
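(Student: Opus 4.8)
The plan is to exploit the functoriality of the differential operation, which reduces the tetrahedron equation for $\dft{\tm}$ to the tetrahedron equation for $\tm$ itself. The key observation is that taking the differential is a functor: for composable maps $F$ and $G$ of manifolds one has the chain rule $\dft(G\circ F)=\dft G\circ\dft F$, and $\dft(\id_{\md^3})=\id_{T\md^3}$. First I would set up the identification carefully. Under the canonical identification $T(\md^6)\cong (T\md)^6$, a map $\tm^{ijk}\cl\md^6\to\md^6$ (acting as $\tm$ on factors $i,j,k$ and as the identity elsewhere) has a differential $\dft(\tm^{ijk})$ that acts on $(T\md)^6$ as $\dft\tm$ on the $i$th, $j$th, $k$th tangent factors and as the identity on the rest. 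In other words, the crucial compatibility to verify is
\begin{gather*}
\dft\big(\tm^{ijk}\big)=(\dft\tm)^{ijk}.
\end{gather*}
This holds because the differential of a product map is the product of the differentials, and the differential of $\id_\md$ on an untouched factor is $\id_{T\md}$.

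Granting this compatibility, the argument is short. Applying $\dft$ to both sides of the tetrahedron equation~\eqref{tetr-eq} for $\tm$, namely
\begin{gather*}
\tm^{123}\circ \tm^{145}\circ \tm^{246}\circ \tm^{356}=\tm^{356}\circ \tm^{246}\circ \tm^{145}\circ \tm^{123},
\end{gather*}
and using the chain rule together with $\dft(\tm^{ijk})=(\dft\tm)^{ijk}$, I would obtain
\begin{gather*}
(\dft\tm)^{123}\circ(\dft\tm)^{145}\circ(\dft\tm)^{246}\circ(\dft\tm)^{356}
=(\dft\tm)^{356}\circ(\dft\tm)^{246}\circ(\dft\tm)^{145}\circ(\dft\tm)^{123},
\end{gather*}
which is exactly the tetrahedron equation for the map $\dft\tm\cl(T\md)^3\to(T\md)^3$. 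Thus $\dft\tm$ is a tetrahedron map.

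The main obstacle, and the only place requiring genuine care, is the bookkeeping behind the identification $T(\md^n)\cong (T\md)^n$ and the resulting formula $\dft(\tm^{ijk})=(\dft\tm)^{ijk}$. One must check that the canonical isomorphism $T(\md\times\md)\cong T\md\times T\md$ is natural, so that it intertwines the differential of a Cartesian product $F\times G$ with the product $\dft F\times \dft G$ of the differentials; from this the behaviour on the labelled factors follows by induction on the number of factors. Everything else is the formal functoriality of $\dft$, so once the naturality of the tangent-functor-on-products is established, the proof is a direct translation of~\eqref{tetr-eq} through the functor $\dft$.
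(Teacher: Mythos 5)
Your argument is correct and complete: the tetrahedron equation for $\tm$ is an identity between compositions of maps $\md^6\to\md^6$, and applying the tangent functor (chain rule plus the natural identification $T(\md^n)\cong(T\md)^n$, which gives $\dft(\tm^{ijk})=(\dft\tm)^{ijk}$) immediately yields the tetrahedron equation for $\dft\tm$. Note that the paper itself does not reproduce a proof of this proposition but defers it to~\cite{ikkp21}; your functoriality argument is the standard one and is exactly what such a proof must amount to, with the only genuinely non-formal point --- the naturality of $T(\md\times\md)\cong T\md\times T\md$ with respect to product maps --- correctly identified and addressed.
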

A similar result on the differentials of Yang--Baxter maps was used 
in~\cite{BIKRP} without proof and is proved in~\cite{ikkp21}.

Corollary~\ref{ctaaa} is proved in~\cite{ikkp21} as well.
For completeness we present the proof from~\cite{ikkp21}, 
since it clarifies some notions which we use in this paper.
\begin{corollary}[\cite{ikkp21}]
\label{ctaaa}
Consider a manifold $\md$, a tetrahedron map
${\tm}\cl\md\times \md\times\md\to\md\times\md\times\md$, and 
its differential
$$
\dft{\tm}\cl T\md\times T\md\times T\md\to T\md\times T\md\times T\md.
$$

Let $\fp\in\md$ such that ${\tm}\big(({\fp},{\fp},{\fp})\big)=({\fp},{\fp},{\fp})$.
Consider the tangent space $T_{\fp}\md\subset T\md$ at the point $\fp\in\md$.
Then we have
\begin{gather}
\label{dts}
\dft{\tm}(T_{\fp}\md\times T_{\fp}\md\times T_{\fp}\md)\subset
T_{\fp}\md\times T_{\fp}\md\times T_{\fp}\md\subset T\md\times T\md\times T\md,
\end{gather}
and the map 
\begin{gather}
\label{dtaa}
\dft{\tm}\big|_{({\fp},{\fp},{\fp})}\cl
T_{\fp}\md\times T_{\fp}\md\times T_{\fp}\md\to T_{\fp}\md\times T_{\fp}\md\times T_{\fp}\md
\end{gather}
is a linear tetrahedron map. Here $\dft{\tm}\big|_{({\fp},{\fp},{\fp})}$ is 
the restriction of the map $\dft{\tm}$ to $T_{\fp}\md\times T_{\fp}\md\times T_{\fp}\md$. 
\end{corollary}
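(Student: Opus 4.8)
The plan is to derive the corollary directly from Proposition~\ref{thdybtet}, which already guarantees that $\dft\tm$ is a tetrahedron map on $T\md\times T\md\times T\md$. The only new content is to restrict this tetrahedron map to the fibre over the fixed point $\fp$ and to check that this restriction makes sense (i.e., that the invariance~\eqref{dts} holds) and that restricting a tetrahedron map to an invariant sub-collection of factors preserves the tetrahedron equation. First I would recall that, by the general functoriality of the differential, for any $\fp\in\md$ with $\tm\big((\fp,\fp,\fp)\big)=(\fp,\fp,\fp)$ the differential maps the tangent space at $(\fp,\fp,\fp)$ into the tangent space at the image point; since the image point is again $(\fp,\fp,\fp)$, this gives exactly the inclusion~\eqref{dts}. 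Under the identification of the tangent bundle of $\md^3$ with $T\md\times T\md\times T\md$, the tangent space at $(\fp,\fp,\fp)$ is precisely $T_\fp\md\times T_\fp\md\times T_\fp\md$, so $\dft\tm$ restricts to a well-defined map~\eqref{dtaa}.

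\textbf{Key steps.} Second, I would verify linearity of the restriction~\eqref{dtaa}. The differential of a smooth (or analytic, or rational) map at a point is by definition a linear map between the corresponding tangent spaces; restricting $\dft\tm$ to the fibre $T_\fp\md\times T_\fp\md\times T_\fp\md$ over the fixed point $(\fp,\fp,\fp)$ yields exactly the total differential $\dft_{(\fp,\fp,\fp)}\tm$ of $\tm$ at $(\fp,\fp,\fp)$, which is linear on the vector space $T_\fp\md\times T_\fp\md\times T_\fp\md$. Third, and this is the crux, I would show that this linear restriction still satisfies the tetrahedron equation~\eqref{tetr-eq}. Here one uses that $\dft\tm$ is a tetrahedron map on $T\md^{\times 3}$ by Proposition~\ref{thdybtet}, together with the observation that each of the constituent maps $(\dft\tm)^{ijk}$ preserves the sub-product of fibres over $\fp$: applying $\dft\tm$ in positions $i,j,k$ sends tangent vectors at $\fp$ to tangent vectors at $\fp$ (again because $\fp$ is fixed), while the identity on the remaining factors trivially preserves $T_\fp\md$. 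Thus both sides of~\eqref{tetr-eq} for $\dft\tm$ map $T_\fp\md^{\times 6}$ into itself and agree there, so their common restriction satisfies the tetrahedron equation on $T_\fp\md^{\times 6}$. This is exactly the statement that~\eqref{dtaa} is a tetrahedron map.

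\textbf{Main obstacle.} The one point deserving care — and what I expect to be the genuine technical step rather than the formal one — is checking that the restriction of $(\dft\tm)^{ijk}$ to $T_\fp\md^{\times 6}$ coincides with the $ijk$-placement of the restricted linear map~\eqref{dtaa}. In other words, one must confirm that the differential commutes with the operation of ``acting on selected factors,'' i.e.\ that $(\dft\tm)^{ijk}\big|_{T_\fp\md^{\times 6}} = \big(\dft\tm\big|_{(\fp,\fp,\fp)}\big)^{ijk}$. This follows from the fact that the differential of a Cartesian product of maps is the product of the differentials, and that the differential of the identity map is the identity; but since $\tm^{ijk}$ fixes $(\fp,\dots,\fp)\in\md^6$ only when $\fp$ is a common fixed point, one needs precisely the hypothesis $\tm\big((\fp,\fp,\fp)\big)=(\fp,\fp,\fp)$ to ensure every intermediate point arising in the compositions in~\eqref{tetr-eq} is again the fixed point $(\fp,\dots,\fp)$. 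Once this compatibility is recorded, the tetrahedron equation for~\eqref{dtaa} is immediate from the one for $\dft\tm$, and the proof is complete.
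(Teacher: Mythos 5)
Your proposal is correct and follows essentially the same route as the paper's own proof: the fixed-point property and the definition of the differential give the inclusion~\eqref{dts} and linearity, Proposition~\ref{thdybtet} gives that $\dft\tm$ is a tetrahedron map, and the tetrahedron equation passes to the restriction. The only difference is that you spell out why each $(\dft\tm)^{ijk}$ preserves $T_{\fp}\md^{\times 6}$ and agrees there with $\bigl(\dft\tm\big|_{(\fp,\fp,\fp)}\bigr)^{ijk}$, a step the paper leaves implicit.
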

\begin{proof}
The property ${\tm}\big(({\fp},{\fp},{\fp})\big)=({\fp},{\fp},{\fp})$ and the definition 
of the differential imply~\eqref{dts} and the fact that the map~\eqref{dtaa} is linear.
By Proposition~\ref{thdybtet}, the differential $\dft{\tm}$ is a tetrahedron map.
Therefore, its restriction $\dft{\tm}\big|_{({\fp},{\fp},{\fp})}$ to 
$T_{\fp}\md\times T_{\fp}\md\times T_{\fp}\md$ is a tetrahedron map as well.
\end{proof}

\begin{remark}
\label{rlappr}
As noticed in~\cite{ikkp21}, the definition of the differential implies that 
the linear tetrahedron map $\dft{\tm}\big|_{({\fp},{\fp},{\fp})}$ 
described in Corollary~\ref{ctaaa} can be regarded as a linear approximation of
the nonlinear tetrahedron map~${\tm}$ at the point $({\fp},{\fp},{\fp})\in\md\times\md\times\md$.
Explicit examples of $\dft{\tm}\big|_{({\fp},{\fp},{\fp})}$ are 
presented in Examples~\ref{eeltr},~\ref{edmh},~\ref{enls}.
\end{remark}

Let $\dm\in\zsp$. Let $\md$ be an $\dm$-dimensional manifold 
with (local) coordinates $x_1,\dots,x_\dm$.
Then $\dim T\md =2\dm$, and we have the (local) coordinates $x_1,\dots,x_\dm,X_1,\dots,X_\dm$ 
on the manifold~$T\md$, where $X_i$ corresponds to the differential $\dft x_i$,
which can be regarded as a function on~$T\md$. 
(Thus, the functions $X_1,\dots,X_\dm$ are linear along the fibres of the bundle $T\md\to\md$.)

Following~\cite{ikkp21}, to study maps of the form 
$$
\md\times \md\times\md\to\md\times\md\times\md,\qqquad
T\md\times T\md\times T\md\to T\md\times T\md\times T\md,
$$
we consider 
\begin{itemize}
	\item $6$ copies of the manifold $\md$ with coordinate systems
\begin{gather*}
(x_1,\dots,x_\dm),\quad(y_1,\dots,y_\dm),\quad(z_1,\dots,z_\dm),\quad
(\tilde{x}_1,\dots,\tilde{x}_\dm),\quad(\tilde{y}_1,\dots,\tilde{y}_\dm),\quad(\tilde{z}_1,\dots,\tilde{z}_\dm),
\end{gather*}
\item $6$ copies of the manifold $T\md$ with coordinate systems
\begin{gather*}
(x_1,\dots,x_\dm,X_1,\dots,X_\dm),\qquad(y_1,\dots,y_\dm,Y_1,\dots,Y_\dm),\qquad(z_1,\dots,z_\dm,Z_1,\dots,Z_\dm),\\
(\tilde{x}_1,\dots,\tilde{x}_\dm,\tilde{X}_1,\dots,\tilde{X}_\dm),\qquad
(\tilde{y}_1,\dots,\tilde{y}_\dm,\tilde{Y}_1,\dots,\tilde{Y}_\dm),\qquad
(\tilde{z}_1,\dots,\tilde{z}_\dm,\tilde{Z}_1,\dots,\tilde{Z}_\dm).
\end{gather*}
\end{itemize}
Here, for each $i=1,\dots,\dm$, the functions $X_i$, $Y_i$, $Z_i$, $\tilde{X}_i$, $\tilde{Y}_i$, $\tilde{Z}_i$
correspond to the differentials $\dft x_i$, $\dft y_i$, $\dft z_i$, $\dft\tilde{x}_i$, $\dft\tilde{y}_i$, $\dft\tilde{z}_i$.
Below we use the following  notation
\begin{gather*}
x=(x_1,\dots,x_\dm),\qquad y=(y_1,\dots,y_\dm),\qquad z=(z_1,\dots,z_\dm),\\
X=(X_1,\dots,X_\dm),\qqquad Y=(Y_1,\dots,Y_\dm),\qqquad Z=(Z_1,\dots,Z_\dm),\\
\tilde{x}=(\tilde{x}_1,\dots,\tilde{x}_\dm),\qquad
\tilde{y}=(\tilde{y}_1,\dots,\tilde{y}_\dm),\qquad\tilde{z}=(\tilde{z}_1,\dots,\tilde{z}_\dm),\\
\tilde{X}=(\tilde{X}_1,\dots,\tilde{X}_\dm),\qqquad
\tilde{Y}=(\tilde{Y}_1,\dots,\tilde{Y}_\dm),\qqquad
\tilde{Z}=(\tilde{Z}_1,\dots,\tilde{Z}_\dm).
\end{gather*}

Consider a tetrahedron map
\begin{gather*}
{\tm}\cl\md\times \md\times\md\to\md\times\md\times\md,\qqquad
(x,y,z)\mapsto(\tilde{x},\tilde{y},\tilde{z}),\\
\tilde{x}_i=f_i(x,y,z),\qqquad
\tilde{y}_i=g_i(x,y,z),\qqquad
\tilde{z}_i=h_i(x,y,z),\qqquad i=1,\dots,\dm.
\end{gather*}
Its differential is the following tetrahedron map 
\begin{gather*}
\dft{\tm}\cl T\md\times T\md\times T\md\to T\md\times T\md\times T\md,\qqquad
(x,X,y,Y,z,Z)\mapsto(\tilde{x},\tilde{X},\tilde{y},\tilde{Y},\tilde{z},\tilde{Z}),\\
\tilde{x}_i=f_i(x,y,z),\qqquad
\tilde{y}_i=g_i(x,y,z),\qqquad
\tilde{z}_i=h_i(x,y,z),\qqquad i=1,\dots,\dm,\\
\tilde{X}_i=\sum_{j=1}^\dm\Big(\frac{\partial f_i(x,y,z)}{\partial x_j}X_j
+\frac{\partial f_i(x,y,z)}{\partial y_j}Y_j+\frac{\partial f_i(x,y,z)}{\partial z_j}Z_j\Big),\\
\tilde{Y}_i=\sum_{j=1}^\dm\Big(\frac{\partial g_i(x,y,z)}{\partial x_j}X_j
+\frac{\partial g_i(x,y,z)}{\partial y_j}Y_j+\frac{\partial g_i(x,y,z)}{\partial z_j}Z_j\Big),\\
\tilde{Z}_i=\sum_{j=1}^\dm\Big(\frac{\partial h_i(x,y,z)}{\partial x_j}X_j
+\frac{\partial h_i(x,y,z)}{\partial y_j}Y_j+\frac{\partial h_i(x,y,z)}{\partial z_j}Z_j\Big).
\end{gather*}

Example~\ref{eeltr} is taken from~\cite{ikkp21}.
\begin{example}
\label{eeltr}
Let $\dm=\dim\md=1$.
Consider the well-known electric network transformation
\begin{gather}
\label{elnt1}
{\tm}\cl\md\times \md\times\md\to\md\times\md\times\md,\qqquad
(x,y,z)\mapsto (\tilde{x},\tilde{y},\tilde{z}),\\
\label{elnt2}
\tilde{x}=\frac{x y}{x + z + x y z},\qqquad
\tilde{y}=x + z + x y z,\qqquad
\tilde{z}=\frac{y z}{x y z+x+z},
\end{gather}
which is a tetrahedron map~\cite{Sergeev,Kashaev-Sergeev}.
As shown in~\cite{ikkp21}, its differential is the following tetrahedron map 
\begin{gather}
\label{difelt}
\dft{\tm}\cl T\md\times T\md\times T\md\to T\md\times T\md\times T\md,\qqquad
(x,X,y,Y,z,Z)\mapsto(\tilde{x},\tilde{X},\tilde{y},\tilde{Y},\tilde{z},\tilde{Z}),\\
\notag
\tilde{x}=\frac{x y}{x + z + x y z},\qqquad
\tilde{y}=x + z + x y z,\qqquad
\tilde{z}=\frac{y z}{x y z+x+z},\\
\label{delnt1}
\tilde{X}=\frac{- x y (1 + x y)Z +  y zX +  x (x + z)Y}{(x y z+x+z)^2},\qqquad
\tilde{Y}=X + Z + x y Z+ x zY + y zX,\\
\label{delnt2}
\tilde{Z}=\frac{- y z (y z+1)X+z (x+z)Y+x yZ}{(x y z+x+z)^2}.
\end{gather}

We assume that $x,y,z,\tilde{x},\tilde{y},\tilde{z}$ take values in~$\mathbb{C}$, 
so $\md$ is a complex manifold.
Consider $\mathrm{i}\in\mathbb{C}$ satisfying $\mathrm{i}^2=-1$.
Formulas~\eqref{elnt1},~\eqref{elnt2} imply 
${\tm}\big((\mathrm{i},\mathrm{i},\mathrm{i})\big)=
(\mathrm{i},\mathrm{i},\mathrm{i})$.

Let $\fp=\mathrm{i}$.
The coordinate system on~$\md$ gives the isomorphism $T_{\fp}\md\cong\mathbb{C}$.
By Corollary~\ref{ctaaa}, we obtain the linear tetrahedron map 
$\dft{\tm}\big|_{(\mathrm{i},\mathrm{i},\mathrm{i})}\cl 
\mathbb{C}^3\to\mathbb{C}^3$. To compute it, 
we substitute $x=y=z=\mathrm{i}$ in~\eqref{delnt1},~\eqref{delnt2} and derive 
the linear map
\begin{gather}
\label{xtx}
\begin{pmatrix}
     X  \\
     Y  \\
     Z
\end{pmatrix}\mapsto
\begin{pmatrix}
     \tilde{X}  \\
     \tilde{Y}  \\
     \tilde{Z}
\end{pmatrix}=
\begin{pmatrix}
     X+2Y  \\
     -Y  \\
     2Y+Z
\end{pmatrix}.
\end{gather}

The linear tetrahedron map~\eqref{xtx} is a linear approximation 
of the nonlinear map~\eqref{elnt1},~\eqref{elnt2} at the point $(\mathrm{i},\mathrm{i},\mathrm{i})$
in the following sense. We have
\begin{gather*}
{\tm}\big((\mathrm{i}+\varepsilon X,\,
\mathrm{i}+\varepsilon Y,\,\mathrm{i}+\varepsilon Z)\big)=
\big(\mathrm{i}+\varepsilon(X+2Y)+\mathcal{O}(\varepsilon^2),\,
\mathrm{i}-\varepsilon Y+\mathcal{O}(\varepsilon^2),\,
\mathrm{i}+\varepsilon(2Y+Z)+\mathcal{O}(\varepsilon^2)\big).
\end{gather*}
\end{example}

\begin{remark}
\label{rlth}
Results of Hietarinta~\cite{Hietarinta97} imply that for any field $\bK$ 
and any $\ah,\bh,\ch\in\bK$ the linear map
\begin{gather}
\label{hietm}
T\cl\bK^3\to\bK^3,\qqquad
\begin{pmatrix}
     X  \\
     Y  \\
     Z
\end{pmatrix}\mapsto
\begin{pmatrix}
     \tilde{X}  \\
     \tilde{Y}  \\
     \tilde{Z}
\end{pmatrix}=
\begin{pmatrix}
     \ah X+(1-\ah\bh)Y  \\
     -\bh Y  \\
     (1-\bh\ch)Y+\ch Z
\end{pmatrix},
\end{gather}
is tetrahedron map. The map~\eqref{xtx}
is of the form~\eqref{hietm} for $\ah=1$, $\bh=-1$, $\ch=1$.
\end{remark}

\begin{proposition}
\label{pfpt}
Consider a set $\md$ and a map $\tm\cl \md^3\to \md^3$.
Suppose that for some ${\sz}\in\zsp$ there are maps 
$\mA_{\sz},\mB_{\sz},\mC_{\sz},\mD_{\sz}\cl \md\to\mat_{\sz}(\bK)$
such that for the corresponding map 
\begin{gather}
\label{mlab}
\mL=\begin{pmatrix}
     \mA_{\sz}& \mB_{\sz}  \\
     \mC_{\sz} & \mD_{\sz}
\end{pmatrix}\cl \md\to\mat_{2{\sz}}(\bK)
\end{gather}
we have the following property\textup{:}

\noindent the local Yang--Baxter equation 
\begin{equation} 
\label{lybe}
\mL^{12} (x) \cdot\mL^{13} (y)\cdot\mL^{23}(z) = 
\mL^{23}(\hat{z})\cdot\mL^{13}(\hat{y})\cdot\mL^{12}(\hat{x}),\qqquad
x,y,z,\hat{x},\hat{y},\hat{z}\in \md,
\end{equation}
is equivalent to the relation 
\begin{gather}
\label{xyzt}
(\hat{x},\hat{y},\hat{z})=\tm\big((x,y,z)\big),\qqquad
x,y,z,\hat{x},\hat{y},\hat{z}\in \md.
\end{gather}
Here $\mL^{12},\mL^{13},\mL^{23}\cl \md\to\mat_{3{\sz}}(\bK)$ are given by~\eqref{L1213}.

Then a point $\fp\in \md$ satisfies 
\begin{gather} 
\label{tppp}
\tm\big((\fp,\fp,\fp)\big)=(\fp,\fp,\fp)
\end{gather}
if and only if the linear map
\begin{gather}
\label{lybl}
\mL(\fp)=\begin{pmatrix}
     \mA_{\sz}(\fp)& \mB_{\sz}(\fp)  \\
     \mC_{\sz}(\fp) & \mD_{\sz}(\fp)
\end{pmatrix}\cl\bK^{2{\sz}}\to\bK^{2{\sz}}
\end{gather}
obeys the Yang--Baxter equation, that is,
\begin{gather} 
\label{mlfp}
\mL^{12} (\fp) \cdot\mL^{13} (\fp)\cdot\mL^{23}(\fp) = 
\mL^{23}(\fp)\cdot\mL^{13}(\fp)\cdot\mL^{12}(\fp).
\end{gather}
\end{proposition}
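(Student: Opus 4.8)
The plan is to observe that the entire statement is a direct specialisation of the equivalence hypothesis, obtained by setting all six arguments $x,y,z,\hat x,\hat y,\hat z$ equal to the single point $\fp$. The hypothesis asserts that, for arbitrary $x,y,z,\hat x,\hat y,\hat z\in\md$, equation~\eqref{lybe} holds if and only if $(\hat x,\hat y,\hat z)=\tm\big((x,y,z)\big)$, i.e.~\eqref{xyzt}. I would apply this biconditional to the particular substitution $x=y=z=\hat x=\hat y=\hat z=\fp$.

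First I would record what each side of the equivalence becomes under this substitution. On the one hand, setting $x=y=z=\hat x=\hat y=\hat z=\fp$ in~\eqref{lybe} turns it into $\mL^{12}(\fp)\cdot\mL^{13}(\fp)\cdot\mL^{23}(\fp)=\mL^{23}(\fp)\cdot\mL^{13}(\fp)\cdot\mL^{12}(\fp)$, which is precisely equation~\eqref{mlfp}, that is, the Yang--Baxter equation for the linear map~\eqref{lybl}. Here one uses that $\mL^{12},\mL^{13},\mL^{23}$ are the block embeddings of $\mL$ given by~\eqref{L1213}, so that their evaluation at $\fp$ yields exactly the three $3\sz\times 3\sz$ matrices appearing in the Yang--Baxter equation for $\mL(\fp)$. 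On the other hand, setting $x=y=z=\hat x=\hat y=\hat z=\fp$ in the relation~\eqref{xyzt} turns it into $(\fp,\fp,\fp)=\tm\big((\fp,\fp,\fp)\big)$, which is condition~\eqref{tppp}.

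Having identified the two reductions, I would conclude in both directions. For the forward implication, assume~\eqref{tppp}; then $(\hat x,\hat y,\hat z)=(\fp,\fp,\fp)=\tm\big((\fp,\fp,\fp)\big)$ is the instance of~\eqref{xyzt} with all arguments equal to $\fp$, so by the hypothesis the corresponding instance of~\eqref{lybe} holds, which is~\eqref{mlfp}. For the converse, assume~\eqref{mlfp}; this is the instance of~\eqref{lybe} with all arguments equal to $\fp$, so by the hypothesis the relation~\eqref{xyzt} holds for these arguments, giving $(\fp,\fp,\fp)=\tm\big((\fp,\fp,\fp)\big)$, which is~\eqref{tppp}. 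This establishes the asserted equivalence.

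There is essentially no computational obstacle in this argument; the only point requiring a moment of care is the verification that equation~\eqref{lybe}, evaluated at $x=y=z=\hat x=\hat y=\hat z=\fp$, coincides literally with the Yang--Baxter equation~\eqref{mlfp} for the fixed matrix $\mL(\fp)$. This is immediate once one recalls the definitions in~\eqref{L1213}, since those block embeddings depend on the argument only through $\mL$ itself. Thus the proposition amounts to the statement that a fixed point of $\tm$ corresponds, under the given equivalence, to a solution of the ordinary Yang--Baxter equation.
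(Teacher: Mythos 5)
Your proposal is correct and follows exactly the paper's own argument: substitute $x=y=z=\hat x=\hat y=\hat z=\fp$ into the assumed equivalence between~\eqref{lybe} and~\eqref{xyzt}, observing that the two sides reduce to~\eqref{mlfp} and~\eqref{tppp} respectively. You merely spell out the two directions of the biconditional more explicitly than the paper does.
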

\begin{proof}
By assumption, \eqref{lybe} is equivalent to~\eqref{xyzt}.
Substituting $x=y=z=\fp$ and $\hat{x}=\hat{y}=\hat{z}=\fp$ 
in~\eqref{lybe},~\eqref{xyzt}, we derive that 
\eqref{mlfp} is equivalent to~\eqref{tppp}.
\end{proof}

\begin{remark}
\label{rlybinv}
In Proposition~\ref{pfpt} equation~\eqref{lybe} can be replaced by the equation
\begin{equation} 
\label{lybinv}
\mL^{23} (z) \cdot\mL^{13} (y)\cdot\mL^{12}(x) = 
\mL^{12}(\hat{x})\cdot\mL^{13}(\hat{y})\cdot\mL^{23}(\hat{z}),
\qqquad
x,y,z,\hat{x},\hat{y},\hat{z}\in \md.
\end{equation}
That is, if equation~\eqref{lybinv} is equivalent to relation~\eqref{xyzt}
then the conclusion of Proposition~\ref{pfpt} remains valid.
\end{remark}

\begin{remark}
\label{rratm}
Let $\md$ be an algebraic variety and $\tm\cl\md^3\to\md^3$ 
be a rational map defined on an open dense subset of~$\md^3$.
Proposition~\ref{pfpt} remains valid in this situation, 
provided that we restrict considerations to the open dense subset of~$\md^3$
on which the map~$\tm$ is defined.

When we consider the expressions 
$\tm\big((x,y,z)\big)$ and $\tm\big((\fp,\fp,\fp)\big)$ 
for such a rational map $\tm\cl\md^3\to\md^3$,
we always assume that $x,y,z,\fp\in\md$ are such that 
$\tm$ is defined at~$(x,y,z)\in\md^3$ and at~$(\fp,\fp,\fp)\in\md^3$.
\end{remark}

\begin{remark}
\label{ryb22}
Results of~\cite{Hietarinta97} imply the following classification 
of linear Yang--Baxter maps $Y\cl\bK\times\bK\to\bK\times\bK$.

A linear map $Y\cl\bK\times\bK\to\bK\times\bK$ 
can be identified with a $2\times 2$ matrix $Y=\begin{pmatrix}
     a&b  \\
     c&d
\end{pmatrix}$, $a,b,c,d\in\bK$.
Then the maps $Y^{12},Y^{13},Y^{23}\cl\bK^3\to\bK^3$ 
are given by the matrices
$$
Y^{12}=\begin{pmatrix}
     a&b&0  \\
     c&d&0 \\
		0&0&1
\end{pmatrix},\qquad Y^{13}=\begin{pmatrix}
     a&0&b  \\
     0&1&0 \\
		c&0&d
\end{pmatrix},\qquad Y^{23}=\begin{pmatrix}
     1&0&0  \\
     0&a&b \\
		0&c&d
\end{pmatrix}.
$$
The Yang--Baxter equation~\eqref{eq_YB} reads 
\begin{gather}
\label{mybeq}
\begin{pmatrix}
     a&b&0  \\
     c&d&0 \\
		0&0&1
\end{pmatrix}\begin{pmatrix}
     a&0&b  \\
     0&1&0 \\
		c&0&d
\end{pmatrix}
\begin{pmatrix}
     1&0&0  \\
     0&a&b \\
		0&c&d
\end{pmatrix}=
\begin{pmatrix}
     1&0&0  \\
     0&a&b \\
		0&c&d
\end{pmatrix}
\begin{pmatrix}
     a&0&b  \\
     0&1&0 \\
		c&0&d
\end{pmatrix}
\begin{pmatrix}
     a&b&0  \\
     c&d&0 \\
		0&0&1
\end{pmatrix}.
\end{gather} 
As shown in~\cite{Hietarinta97}, a matrix $\begin{pmatrix}
     a&b  \\
     c&d
\end{pmatrix}$ obeys~\eqref{mybeq} if and only if one has
 \begin{gather}
\label{abcdrel}
abc=0,\qquad bcd=0,\qquad bc(b-c)=0,\qquad b(ad+b-1)=0,\qquad c(ad+c-1)=0.
\end{gather} 
As noticed in~\cite{Hietarinta97}, 
it is easy to show that a matrix $\begin{pmatrix}
     a&b  \\
     c&d
\end{pmatrix}$
satisfies~\eqref{abcdrel} if and only if this matrix
belongs to one of the following classes
\begin{gather}
\label{clyb}
\begin{pmatrix}
     \ah& 0\\
     0&\ddh
\end{pmatrix},\qquad
\begin{pmatrix}
     \ah&1-\ah\ddh\\
     0&\ddh
\end{pmatrix},\qquad
\begin{pmatrix}
     \ah& 0\\
     1-\ah\ddh &\ddh
\end{pmatrix},\qquad
\begin{pmatrix}
     0&1\\
     1&0
\end{pmatrix},\qquad
\ah,\ddh\in\bK.
\end{gather}
\end{remark}

\begin{remark}
\label{rlaxr}
It is known that the local Yang--Baxter equation~\cite{Nijhoff-2} can be regarded 
as a ``Lax equation'' or ``Lax system'' for the tetrahedron equation 
(see, e.g.,~\cite{DMH15} and references therein).
By analogy with the notions of 
Lax matrices, Lax representations, and strong Lax matrices 
for Yang-Baxter maps~\cite{Veselov2,Kouloukas2},
one can use the following ``Lax terminology'' for tetrahedron maps.

Suppose that we have a tetrahedron map $\tm\cl\md^3\to\md^3$
and a matrix-function $\mL$ of the form~\eqref{mlab}
such that relation~\eqref{xyzt} implies the local Yang--Baxter equation~\eqref{lybe}. 
Then $\mL$ can be called a Lax matrix (or Lax representation) for the map~$\tm$.
If $\mL$ is such that relation~\eqref{xyzt} is equivalent to equation~\eqref{lybe} 
then $\mL$ can be called a strong Lax matrix (or strong Lax representation) for~$\tm$.
Using this terminology, 
one can say that in Proposition~\ref{pfpt} we consider a map~$\tm$ 
with a strong Lax representation.

Note that the paper~\cite{KR} uses the term ``Lax representation'' 
in a slightly different sense. 
Namely, if a tetrahedron map $\tm\cl\md^3\to\md^3$ and 
a matrix-function $\mL$ of the form~\eqref{mlab} are such that 
relation~\eqref{xyzt} is equivalent to equation~\eqref{lybinv} then the paper~\cite{KR}
says that equation~\eqref{lybinv} is a Lax representation for the map~$\tm$.
\end{remark}

\begin{example}
\label{edmh}
Dimakis and M\"uller-Hoissen~\cite{Dimakis} 
constructed the rational tetrahedron map
\begin{gather}
\label{dmt}
\tm\cl \bC^2\times \bC^2\times \bC^2\to \bC^2\times \bC^2\times \bC^2,\quad
\big((x_1,x_2),(y_1,y_2),(z_1,z_2)\big)\mapsto 
\big((\tilde{x}_1,\tilde{x}_2),(\tilde{y}_1,\tilde{y}_2),(\tilde{z}_1,\tilde{z}_2)\big),\\
\notag
\tilde{x}_1=y_1 \mathsf{C},\qqquad
\tilde{x}_2=\Big(y_1 - \frac{\mathsf{A}}{x_1}\Big) \mathsf{C},\qqquad
\tilde{y}_1=\frac{x_1}{\mathsf{C}},\qqquad 
\tilde{y}_2=1 - \mathsf{B},\\ 
\notag
\tilde{z}_1=\frac{z_1 y_1(x_1-x_2)}{\mathsf{A}},\qqquad 
\tilde{z}_2=1 - \frac{(1-y_2) (1-z_2)}{\mathsf{B}},\\
\notag
\mathsf{A} = y_2 z_1 x_1-y_2 x_1-z_1 x_2+x_1 y_1,\qqquad 
\mathsf{B} = y_2 z_2 x_1-y_2 x_1-z_2 x_2+1,\\
\notag
\mathsf{C} = \frac{\mathsf{A} \mathsf{B}-\mathsf{A} (1-y_2) (1-z_2) x_1-\mathsf{B} z_1(x_1-x_2)}{\mathsf{A} \mathsf{B}-\mathsf{A} (1-y_2) (1-z_2)-\mathsf{B} z_1 y_1(x_1-x_2)}.
\end{gather}

Let $\md=\bC^2$ and ${\sz}=1$.
Following~\cite[Section~10]{Dimakis}, we consider the $2\times 2$ matrix-function 
\begin{gather}
\notag
\mL\cl\bC^2\to\mat_2(\bC),\qqquad
\mL(x)=\begin{pmatrix}
       x_1 & x_2 \\
       1-x_1 & 1-x_2
\end{pmatrix},\qqquad x=(x_1,x_2)\in\bC^2.
\end{gather}
As shown in~\cite[Section~10]{Dimakis}, 
for this $\mL$ and $\md=\bC^2$ equation~\eqref{lybe} 
is equivalent to relation~\eqref{xyzt} for the map~\eqref{dmt}.
(This equivalence holds on the open dense subset of 
$\bC^2\times \bC^2\times \bC^2$ on which the rational map~\eqref{dmt} is defined.)
Hence $\mL$ is a strong Lax representation for~\eqref{dmt} in the sense of Remark~\ref{rlaxr}.

Using Proposition~\ref{pfpt}, 
we are going to find points $\fp\in \md$ satisfying $\tm\big((\fp,\fp,\fp)\big)=(\fp,\fp,\fp)$.
According to Proposition~\ref{pfpt} and Remarks~\ref{rratm},~\ref{ryb22},
it is sufficient to find points $\fp=(x_1,x_2)\in\bC^2$ 
satisfying the following conditions:
\begin{gather}
\label{cond1}
\text{the $2\times 2$ matrix $\mL(\fp)$ belongs to one of the classes~\eqref{clyb}},\\
\label{cond2}
\text{the rational map~\eqref{dmt} is defined at the point $(\fp,\fp,\fp)$.}
\end{gather}
Condition~\eqref{cond2} means that 
the denominators of the rational functions in the formula 
for the map~\eqref{dmt} must not vanish at the point $(\fp,\fp,\fp)$.

Conditions~\eqref{cond1},~\eqref{cond2} are valid in the following cases
\begin{gather}
\label{case1}
\fp=(\ct,0),\qqquad\mL(\fp)=\begin{pmatrix}
       \ct & 0 \\
       1-\ct & 1
\end{pmatrix},
\qqquad\ct\in\bC,\qquad \ct\neq 0,\\
\label{case2}
\fp=(1,\qt),\qqquad\mL(\fp)=\begin{pmatrix}
       1 & \qt \\
       0 & 1-\qt
\end{pmatrix},
\qqquad\qt\in\bC,\qquad \qt\neq 1.
\end{gather}
Therefore, in the cases~\eqref{case1},~\eqref{case2} 
we have $\tm\big((\fp,\fp,\fp)\big)=(\fp,\fp,\fp)$.
Since~$\md=\bC^2$, one has $\tm_{\fp}\md\cong\mathbb{C}^2$.
Hence, by Corollary~\ref{ctaaa}, we obtain the linear tetrahedron map
$$
\dft \tm\big|_{(\fp,\fp,\fp)}\cl\mathbb{C}^2\times\mathbb{C}^2\times\mathbb{C}^2
\to\mathbb{C}^2\times\mathbb{C}^2\times\mathbb{C}^2.
$$

Computing $\dft \tm$ and $\dft \tm\big|_{(\fp,\fp,\fp)}$ in the case~\eqref{case1}, 
one derives that $\dft \tm\big|_{(\fp,\fp,\fp)}$ is given by
\begin{gather}
\label{lmct}
\begin{gathered}
\dft \tm\big|_{(\fp,\fp,\fp)}\cl
\mathbb{C}^2\times\mathbb{C}^2\times\mathbb{C}^2
\to\mathbb{C}^2\times\mathbb{C}^2\times\mathbb{C}^2,\qqquad
\fp=(\ct,0),\qquad\ct\in\bC,\qquad\ct\neq 0,\\
\big((X_1,X_2),(Y_1,Y_2),(Z_1,Z_2)\big)\mapsto 
\big((\tilde{X}_1,\tilde{X}_2),(\tilde{Y}_1,\tilde{Y}_2),(\tilde{Y}_1,\tilde{Y}_2)\big),
\end{gathered}\\
\notag
\tilde{X}_1={X_1}+\frac{(\ct-1)}{\ct} {Y_1}+\frac{(\ct+1) (\ct-1)^2}{\ct} {Y_2}-\frac{(\ct-1)}{\ct} {Z_1}+(1-\ct) {Z_2},\\
\notag
\tilde{X}_2={X_2}+(1-\ct) {Y_2},\\
\notag
\tilde{Y}_1=\frac{1}{\ct}{Y_1}-\frac{(\ct+1) (\ct-1)^2}{\ct} {Y_2}+\frac{(\ct-1)}{\ct}{Z_1}+(\ct-1) {Z_2},\\
\notag
\tilde{Y}_2=\ct {Y_2},\qqquad
\tilde{Z}_1=(1-\ct) {Y_2}+{Z_1},\qqquad
\tilde{Z}_2=(1-\ct) {Y_2}+{Z_2}.
\end{gather}

In the case~\eqref{case2} one obtains
\begin{gather}
\label{lmqt}
\begin{gathered}
\dft \tm\big|_{(\fp,\fp,\fp)}\cl\mathbb{C}^2\times\mathbb{C}^2\times\mathbb{C}^2
\to\mathbb{C}^2\times\mathbb{C}^2\times\mathbb{C}^2,\qqquad
\fp=(1,\qt),\qquad\qt\in\bC,\qquad\qt\neq 1,\\
\big((X_1,X_2),(Y_1,Y_2),(Z_1,Z_2)\big)\mapsto 
\big((\tilde{X}_1,\tilde{X}_2),(\tilde{Y}_1,\tilde{Y}_2),(\tilde{Y}_1,\tilde{Y}_2)\big),
\end{gathered}\\
\notag
\tilde{X}_1={X_1}+\frac{\qt}{\qt-1} {Y_1},\qqquad
\tilde{X}_2={X_2}+\frac{\qt}{\qt-1} {Y_1},\qqquad
\tilde{Y}_1=\frac{1}{1-\qt}{Y_1},\\
\notag
\tilde{Y}_2=(1-\qt) \qt {X_1}+\qt {X_2}+(1-\qt) {Y_2},\\
\notag
\tilde{Z}_1=\frac{\qt}{\qt-1} {Y_1}+{Z_1},\qqquad
\tilde{Z}_2=(\qt-1) \qt {X_1}-\qt {X_2}+\qt {Y_2}+{Z_2}.
\end{gather}
\end{example}

\begin{example}
\label{enls}
One of us~\cite{KR} constructed the rational tetrahedron map
\begin{gather}
\label{nlst}
\begin{gathered}
\tm\cl \bC^3\times \bC^3\times \bC^3\to \bC^3\times \bC^3\times \bC^3,\\
\big((x_1,x_2,x_3),(y_1,y_2,y_3),(z_1,z_2,z_3)\big)\mapsto 
\big((\tilde{x}_1,\tilde{x}_2,\tilde{x}_3),(\tilde{y}_1,\tilde{y}_2,\tilde{y}_3),
(\tilde{z}_1,\tilde{z}_2,\tilde{z}_3)\big),
\end{gathered}\\
\notag
\tilde{x}_1=\frac{y_3 x_1-y_1z_2}{z_3},\qqquad
\tilde{x}_2=
\frac{x_3z_3(x_2(z_3+z_1z_2)+y_2z_1(x_3+x_1x_2))}{x_3y_3z_3-(x_2z_2+y_2(x_3+x_1x_2))(y_3x_1z_1-y_1(z_3+z_1z_2))},\\
\notag
\tilde{x}_3=x_3,\qquad
\tilde{y}_1=\frac{y_1(z_3+z_1z_2)-y_3x_1z_1}{x_3z_3},\qquad
\tilde{y}_2=x_2z_2+y_2(x_3+x_1x_2),\qquad
\tilde{y}_3=y_3,\\
\notag
\tilde{z}_1=
\frac{z_3(y_3z_1(x_3+x_1x_2)-x_2y_1(z_3+z_1z_2))}{x_3y_3z_3-(x_2z_2+y_2(x_3+x_1x_2))(y_3x_1z_1-y_1(z_3+z_1z_2))},\\
\notag
\tilde{z}_2=x_1y_2+z_2,\qqquad
\tilde{z}_3=z_3.
\end{gather}
(These formulas appear in~\cite{KR} in different notation: $x_3,y_3,z_3$ are denoted by $a,b,c$ in~\cite{KR}.)

Let $\md=\bC^3$ and ${\sz}=1$.
Following~\cite[Section~4]{KR}, we consider the $2\times 2$ matrix-function 
\begin{gather}
\notag
\mL\cl\bC^3\to\mat_2(\bC),\qqquad
\mL(x)=\begin{pmatrix}
        x_3+x_1x_2 & x_1\\
         x_2 & 1
				\end{pmatrix},\qqquad x=(x_1,x_2,x_3)\in\bC^3.
\end{gather}
For this~$\mL$ and $\md=\bC^3$ equation~\eqref{lybinv} 
holds as a consequence of relation~\eqref{xyzt} for the map~\eqref{nlst}, 
but, in this case, \eqref{lybinv} is not equivalent to~\eqref{xyzt}.
Hence $\mL$ is a Lax representation for the map inverse to~\eqref{nlst},
but this Lax representation is not strong.

Therefore, in this case, Remark~\ref{rlybinv} and Proposition~\ref{pfpt} 
cannot be applied immediately. However, the idea of Proposition~\ref{pfpt} 
can be applied here with the following slight modifications.

Below we use the notation
\begin{gather*}
x=(x_1,x_2,x_3),\,\ y=(y_1,y_2,y_3),\,\  z=(z_1,z_2,z_3),\,\ 
\tilde{x}=(\tilde{x}_1,\tilde{x}_2,\tilde{x}_3),\,\ 
\tilde{y}=(\tilde{y}_1,\tilde{y}_2,\tilde{y}_3),\,\ 
\tilde{z}=(\tilde{z}_1,\tilde{z}_2,\tilde{z}_3),\\
x_j,y_j,z_j,\tilde{x}_j,\tilde{y}_j,\tilde{z}_j\in\bC,\qqquad
j=1,2,3.
\end{gather*}
It is easy to check that, for the map~\eqref{nlst}, the relation
\begin{gather*}
(\tilde{x},\tilde{y},\tilde{z})=\tm\big((x,y,z)\big),\qqquad
x,y,z,\tilde{x},\tilde{y},\tilde{z}\in\bC^3,
\end{gather*}
is equivalent to the system
\begin{gather*} 
\mL^{23} (z) \cdot\mL^{13} (y)\cdot\mL^{12}(x) = 
\mL^{12}(\tilde{x})\cdot\mL^{13}(\tilde{y})\cdot\mL^{23}(\tilde{z}),
\qqquad
x,y,z,\tilde{x},\tilde{y},\tilde{z}\in\bC^3,\\
\tilde{x}_3=x_3,\qqquad
\tilde{y}_3=y_3,\qqquad
\tilde{z}_3=z_3.
\end{gather*}
(This equivalence holds on the open dense subset of 
$\bC^3\times \bC^3\times \bC^3$ on which the rational map~\eqref{nlst} is defined.)

Therefore, if a point $\fp=(x_1,x_2,x_3)\in\bC^3$ satisfies the following conditions:
\begin{gather}
\label{ncond1}
\text{the $2\times 2$ matrix $\mL(\fp)$ belongs to one of the classes~\eqref{clyb}},\\
\label{ncond2}
\text{the rational map~\eqref{nlst} is defined at the point $(\fp,\fp,\fp)$,}
\end{gather}
then we have $\mL^{23}(\fp)\cdot\mL^{13}(\fp)\cdot\mL^{12}(\fp)=
\mL^{12}(\fp) \cdot\mL^{13} (\fp)\cdot\mL^{23}(\fp)$ and 
$\tm\big((\fp,\fp,\fp)\big)=(\fp,\fp,\fp)$.

Condition~\eqref{ncond2} means that 
the denominators of the rational functions in the formula 
for the map~\eqref{nlst} must not vanish at the point $(\fp,\fp,\fp)$.

Conditions~\eqref{ncond1},~\eqref{ncond2} are valid in the following cases
\begin{gather}
\label{ncase1}
\fp=(0,1-\qt_1,\qt_1),\qqquad\mL(\fp)=\begin{pmatrix}
       \qt_1 & 0 \\
       1-\qt_1 & 1
\end{pmatrix},
\qqquad\qt_1\in\bC,\qquad\qt_1\neq 0,\\
\label{ncase2}
\fp=(1-\qt_2,0,\qt_2),\qqquad\mL(\fp)=\begin{pmatrix}
       \qt_2 & 1-\qt_2 \\
       0 & 1
\end{pmatrix},
\qqquad\qt_2\in\bC,\qquad\qt_2\neq 0,\\
\label{ncase3}
\fp=(0,0,\qt_3),\qqquad\mL(\fp)=\begin{pmatrix}
       \qt_3 & 0 \\
       0 & 1
\end{pmatrix},
\qqquad\qt_3\in\bC,\qquad\qt_3\neq 0.
\end{gather}
Therefore, in the cases~\eqref{ncase1},~\eqref{ncase2},~\eqref{ncase3} 
we have $\tm\big((\fp,\fp,\fp)\big)=(\fp,\fp,\fp)$.
Since~$\md=\bC^3$, one has $\tm_{\fp}\md\cong\mathbb{C}^3$.
Hence, by Corollary~\ref{ctaaa}, we obtain the linear tetrahedron map
$$
\dft \tm\big|_{(\fp,\fp,\fp)}\cl\mathbb{C}^3\times\mathbb{C}^3\times\mathbb{C}^3
\to\mathbb{C}^3\times\mathbb{C}^3\times\mathbb{C}^3.
$$

Computing $\dft \tm$ and $\dft \tm\big|_{(\fp,\fp,\fp)}$ in the case~\eqref{ncase1},
one derives that $\dft \tm\big|_{(\fp,\fp,\fp)}$ is given by
\begin{gather}
\label{lmqt1}
\begin{gathered}
\dft \tm\big|_{(\fp,\fp,\fp)}\cl\mathbb{C}^3\times\mathbb{C}^3\times\mathbb{C}^3
\to\mathbb{C}^3\times\mathbb{C}^3\times\mathbb{C}^3,\qquad
\fp=(0,1-\qt_1,\qt_1),\qquad\qt_1\in\bC,\qquad\qt_1\neq 0,\\
\big((X_1,X_2,X_3),(Y_1,Y_2,Y_3),(Z_1,Z_2,Z_3)\big)\mapsto 
\big((\tilde{X}_1,\tilde{X}_2,\tilde{X}_3),(\tilde{Y}_1,\tilde{Y}_2,\tilde{Y}_3),(\tilde{Y}_1,\tilde{Y}_2,\tilde{Y}_3)\big),
\end{gathered}\\
\notag
\tilde{X}_1={X_1}+\frac{({\qt_1}-1)}{{\qt_1}}{Y_1},\\
\notag
\tilde{X}_2={X_2}
-\frac{({\qt_1}-1)^2}{(\qt_1)^2} {Y_1}+\frac{({\qt_1}-1)}{{\qt_1}} {Y_3}
-\frac{({\qt_1}-1)}{{\qt_1}}{Z_1}-\frac{({\qt_1}-1)}{{\qt_1}} {Z_3},\\
\notag
\tilde{X}_3={X_3},\qqquad \tilde{Y}_1=\frac{1}{{\qt_1}}{Y_1},\\
\notag
\tilde{Y}_2=({\qt_1}-1)^2 {X_1}+(1-{\qt_1}) {X_2}+(1-{\qt_1}) {X_3}+{\qt_1} {Y_2}+(1-{\qt_1}) {Z_2},\\
\notag
\tilde{Y}_3={Y_3},\qqquad
\tilde{Z}_1=\frac{({\qt_1}-1)}{{\qt_1}} {Y_1}+{Z_1},\qqquad
\tilde{Z}_2=(1-{\qt_1}) {X_1}+{Z_2},\qqquad
\tilde{Z}_3={Z_3}.
\end{gather}

Similarly, in the case~\eqref{ncase2} we obtain
\begin{gather}
\label{lmqt2}
\begin{gathered}
\dft \tm\big|_{(\fp,\fp,\fp)}\cl\mathbb{C}^3\times\mathbb{C}^3\times\mathbb{C}^3
\to\mathbb{C}^3\times\mathbb{C}^3\times\mathbb{C}^3,\qquad
\fp=(1-\qt_2,0,\qt_2),\qquad\qt_2\in\bC,\quad\qt_2\neq 0,\\
\big((X_1,X_2,X_3),(Y_1,Y_2,Y_3),(Z_1,Z_2,Z_3)\big)\mapsto 
\big((\tilde{X}_1,\tilde{X}_2,\tilde{X}_3),(\tilde{Y}_1,\tilde{Y}_2,\tilde{Y}_3),(\tilde{Y}_1,\tilde{Y}_2,\tilde{Y}_3)\big),
\end{gathered}\\
\notag
\tilde{X}_1={X_1}-\frac{({\qt_2}-1)}{{\qt_2}} {Y_3}
+\frac{({\qt_2}-1)}{{\qt_2}} {Z_2}+\frac{({\qt_2}-1)}{{\qt_2}} {Z_3},\\
\notag
\tilde{X}_2={X_2}+(1-{\qt_2}) {Y_2},\qqquad \tilde{X}_3={X_3},\\
\notag
\tilde{Y}_1=
\frac{({\qt_2}-1)}{{\qt_2}} {X_1}+\frac{({\qt_2}-1)}{{\qt_2}} {X_3}
+\frac{1}{{\qt_2}}{Y_1}
-\frac{({\qt_2}-1)^2}{(\qt_2)^2} {Y_3}
+\frac{({\qt_2}-1)}{{\qt_2}} {Z_1}
+\frac{({\qt_2}-1)^2}{(\qt_2)^2} {Z_2}
+\frac{({\qt_2}-1)^2}{(\qt_2)^2} {Z_3},\\
\notag
\tilde{Y}_2={\qt_2} {Y_2},\qqquad\tilde{Y}_3={Y_3},\\
\notag
\tilde{Z}_1=({\qt_2}-1) {X_2}-({\qt_2}-1)^2{Y_2}+{Z_1},\qqquad
\tilde{Z}_2=(1-{\qt_2}) {Y_2}+{Z_2},\qqquad
\tilde{Z}_3={Z_3}.
\end{gather}

Finally, in the case~\eqref{ncase3} one gets a very simple map
\begin{gather*}
\dft \tm\big|_{(\fp,\fp,\fp)}\cl\mathbb{C}^3\times\mathbb{C}^3\times\mathbb{C}^3
\to\mathbb{C}^3\times\mathbb{C}^3\times\mathbb{C}^3,\qquad
\fp=(0,0,\qt_3),\qquad\qt_3\in\bC,\qquad\qt_3\neq 0,\\
\big((X_1,X_2,X_3),(Y_1,Y_2,Y_3),(Z_1,Z_2,Z_3)\big)\mapsto 
\big((\tilde{X}_1,\tilde{X}_2,\tilde{X}_3),(\tilde{Y}_1,\tilde{Y}_2,\tilde{Y}_3),(\tilde{Y}_1,\tilde{Y}_2,\tilde{Y}_3)\big),\\
\tilde{X}_1=X_1,\qqquad
\tilde{X}_2=X_2,\qqquad \tilde{X}_3={X_3},\\
\tilde{Y}_1=\frac{1}{\qt_3}{Y_1},\qqquad
\tilde{Y}_2={\qt_3} {Y_2},\qqquad\tilde{Y}_3={Y_3},\qqquad
\tilde{Z}_1={Z_1},\qqquad
\tilde{Z}_2={Z_2},\qqquad
\tilde{Z}_3={Z_3}.
\end{gather*}
\end{example}

\begin{remark}
\label{rresul}
It was noticed in~\cite{ikkp21} that 
the linear tetrahedron map $\dft{\tm}\big|_{({\fp},{\fp},{\fp})}$ 
described in Corollary~\ref{ctaaa} can be regarded as a linear approximation of
a nonlinear tetrahedron map~${\tm}\cl\md^3\to\md^3$ at a point $({\fp},{\fp},{\fp})\in\md^3$
satisfying ${\tm}\big(({\fp},{\fp},{\fp})\big)=({\fp},{\fp},{\fp})$.
Relations with the local Yang--Baxter equation and Lax representations,
which we consider in this section, were not discussed in~\cite{ikkp21}.
As shown in this section, 
a Lax representation (defined for~${\tm}$ by means of the local Yang--Baxter equation)
helps to find points $({\fp},{\fp},{\fp})\in\md^3$ 
satisfying ${\tm}\big(({\fp},{\fp},{\fp})\big)=({\fp},{\fp},{\fp})$.

The map~\eqref{lmct} was obtained in~\cite{ikkp21}.
The maps~\eqref{lmqt},~\eqref{lmqt1},~\eqref{lmqt2} are new.
\end{remark}

\section{Conclusion}
\label{sconc}

In this paper we have presented several constructions of 
tetrahedron maps, using algebraic and differential-geometric tools.

In particular, we have obtained 
the families of new nonlinear polynomial tetrahedron maps~\eqref{mc1}, \eqref{mc2}
on the space of square matrices, using a matrix refactorisation equation, 
which does not coincide with the standard local Yang--Baxter equation.
In Section~\ref{slint} Liouville integrability has been proved 
for the map~\eqref{mc1} in the case $1\le\bm\le\sz/2$
and for~\eqref{mc2} in the case $3\sz/2\le\bm\le 2\sz-1$.

Also, we have shown how to derive linear tetrahedron maps 
as linear approximations of nonlinear ones, using Lax representations and the differentials 
of nonlinear tetrahedron maps on manifolds.
Applying this construction to nonlinear maps from~\cite{Dimakis,KR},
we have obtained parametric families of new linear tetrahedron maps 
\eqref{lmqt}, \eqref{lmqt1}, \eqref{lmqt2} 
with nonlinear dependence on the parameters $\qt$, $\qt_1$, $\qt_2$. 

Another result is the new nonlinear tetrahedron maps~\eqref{mnK},~\eqref{tmnK}, 
which are matrix generalisations of the map~\eqref{msk} 
from Sergeev's classification~\cite{Sergeev}.
As shown in Proposition~\ref{rnc},
in~\eqref{mnK}--\eqref{tinvmK} one can replace $\mat_{\sz}(\bK)$ by 
an arbitrary associative ring~$\mathcal{A}$ with a unit.
This gives the maps~\eqref{amnK}--\eqref{atinvmK}, 
which can be viewed as tetrahedron maps in noncommuting variables.

Furthermore, new tetrahedron maps~\eqref{tmgr} on an arbitrary group~$\gmd$
have been presented.

Motivated by the results of this paper,
we suggest the following directions for future research:
\begin{itemize}
\item As said above, 
we have proved Liouville integrability of~\eqref{mc1} for $1\le\bm\le\sz/2$
and of~\eqref{mc2} for $3\sz/2\le\bm\le 2\sz-1$.
One can try to prove Liouville integrability of the maps~\eqref{mc1},~\eqref{mc2}
for other values of~$\sz$,~$\bm$.
\item We have obtained the tetrahedron 
maps~\eqref{mc1},~\eqref{mc2}, using the matrix refactorisation equation~\eqref{mrfe} with~\eqref{L12bm}, 
which does not coincide with the matrix local Yang--Baxter equation~\eqref{lybef}.

To derive the maps~\eqref{mc1},~\eqref{mc2}, 
we have used in equation~\eqref{mrfe} with~\eqref{L12bm} the matrix-function~\eqref{Lsz} 
of the form~\eqref{Lsz11}. It would be interesting to obtain other tetrahedron maps 
from equation~\eqref{mrfe} with~\eqref{L12bm}, using matrix-functions~\eqref{Lsz} of other forms.

\item It is well known that Yang--Baxter and tetrahedron maps are closely related to integrable 
lattice equations (see, e.g.,~\cite{Pavlos2019,Pavlos2012,Kassotakis-Tetrahedron,pap-Tongas}).
As discussed above, in this paper we have constructed several families of new nonlinear tetrahedron maps, 
some of which are Liouville integrable, 
and also we have described a procedure to derive new linear tetrahedron maps as linear approximations of nonlinear ones.
It would be interesting to study possible relations of the obtained tetrahedron maps with integrable lattice equations, using the methods described in~\cite{Pavlos2012,Pavlos2019,Kassotakis-Tetrahedron,pap-Tongas} and references therein. 
In particular, it would be interesting to understand whether the above-mentioned linear approximations of nonlinear tetrahedron maps correspond to some sort of linear approximations of nonlinear lattice equations.

\end{itemize}

\section*{Acknowledgements} 

The work on Sections~\ref{sintr},~\ref{stmrmr},~\ref{stmgr}
was supported by the Russian Science Foundation (grant No. 21-71-30011).

The work on Sections~\ref{preliminaries},~\ref{sdtm},~\ref{sconc} 
was carried out within the framework of a development programme for the Regional Scientific and Educational Mathematical Centre of the P.G. Demidov Yaroslavl State University with financial support from the Ministry of Science and Higher Education of the Russian Federation (Agreement on provision of subsidy from the federal budget No. 075-02-2022-886).

We would like to thank A.~Doliwa, V.A.~Kolesov, M.M.~Preobrazhenskaia, S.M.~Sergeev, 
and D.V.~Talalaev for useful discussions.

\end{document}